\documentclass[prd,reprint, nofootinbib,amsmath,amssymb, aps, floatfix]{revtex4-1}
\usepackage{dcolumn}

\usepackage{bm}
\usepackage{amssymb,amsmath,amsthm,graphicx,ulem}
\usepackage[linktoc=page]{hyperref}

\usepackage{graphicx,subfigure}
\usepackage{epsfig}
\usepackage{amsmath}
\usepackage{amsfonts}
\usepackage{amssymb}
\usepackage[usenames]{color}
\usepackage{enumerate}

\usepackage[letterpaper,left=2.cm,right=2.cm,top=2.5cm,bottom=2.5cm]{geometry}
\usepackage[T1]{fontenc}
\newtheorem{thm}{Theorem}[section]

\newtheorem{cor}[thm]{Corollary} 
\newtheorem{lem}[thm]{Lemma}
\theoremstyle{remark}
\newtheorem{defn}[thm]{Definition}
\newtheorem{rem}[thm]{Remark}

\newtheorem{conv}[thm]{Convention}

\numberwithin{equation}{section}
\begin{document}
\normalem

\title{Proof of a New Area Law in General Relativity}
\author{Raphael Bousso}%
 \email{bousso@lbl.gov}
\affiliation{ Center for Theoretical Physics and Department of Physics\\
University of California, Berkeley, CA 94720, USA 
}%
\affiliation{Lawrence Berkeley National Laboratory, Berkeley, CA 94720, USA}

\author{Netta Engelhardt}
\email{engeln@physics.ucsb.edu}
\affiliation{Department of Physics, University of California, Santa Barbara, CA 93106, USA 
}%
\bibliographystyle{utcaps}

\begin{abstract}
A future holographic screen is a hypersurface of indefinite signature, foliated by marginally trapped surfaces with area $A(r)$. We prove that $A(r)$ grows strictly monotonically. Future holographic screens arise in gravitational collapse. Past holographic screens exist in our own universe; they obey an analogous area law. Both exist more broadly than event horizons or dynamical horizons. Working within classical General Relativity, we assume the null curvature condition and certain generiticity conditions. We establish several nontrivial intermediate results. If a surface $\sigma$ divides a Cauchy surface into two disjoint regions, then a null hypersurface $N$ that contains $\sigma$ splits the entire spacetime into two disjoint portions: the future-and-interior, $K^+$; and the past-and-exterior, $K^-$. If a family of surfaces $\sigma(r)$ foliate a hypersurface, while flowing everywhere to the past or exterior, then the future-and-interior $K^+(r)$ grows monotonically under inclusion. If the surfaces $\sigma(r)$ are marginally trapped, we prove that the evolution {\em must} be everywhere to the past or exterior, and the area theorem follows. A thermodynamic interpretation as a Second Law is suggested by the Bousso bound, which relates $A(r)$ to the entropy on the null slices $N(r)$ foliating the spacetime. In a companion letter, we summarize the proof and discuss further implications.
\end{abstract}

\maketitle


\tableofcontents

\section{Introduction}
\label{intro}

The celebrated laws of black hole thermodynamics~\cite{Bek72,Bek74,Haw74,Haw75} ascribe physical properties to the event horizon of a black hole. However, the event horizon is defined globally, as the boundary of the past of future infinity. Thus, the location of the thermodynamic object depends on the future history of the spacetime. For example, an observer in a perfectly flat spacetime region might already be inside a black hole, if a null shell is collapsing outside their past light-cone. By causality, Hawking radiation and the first and second law of black hole thermodynamics should have no manifestation for such an observer. Conversely, once a black hole has formed, its thermodynamic properties should be observable at finite distance, regardless of whether the collapsed region already coincides with the true event horizon, or is headed for substantial growth in the distant future.

Here we consider the problem of finding a geometric object that is locally defined, and which obeys a classical law analogous to one of the laws of thermodynamics. We will focus on the second law, whose manifestation in classical General Relativity is the statement that the area of certain surfaces cannot decrease. For the cross-sections of an event horizon this was proven by Hawking in 1971~\cite{Haw71}, but as noted above the event horizon is not locally defined.

We will formulate and prove a new area theorem. It is obeyed by what we shall call a {\em future (or past) holographic screen}, $H$. $H$ is a hypersurface foliated by marginally (anti-)trapped surfaces, which are called {\em leaves}. This definition is local, unlike that of an event horizon. It requires knowledge only of an infinitesimal neighborhood of each leaf.
\begin{figure*}[ht]
\subfigure[]{
\includegraphics[width=0.45 \textwidth]{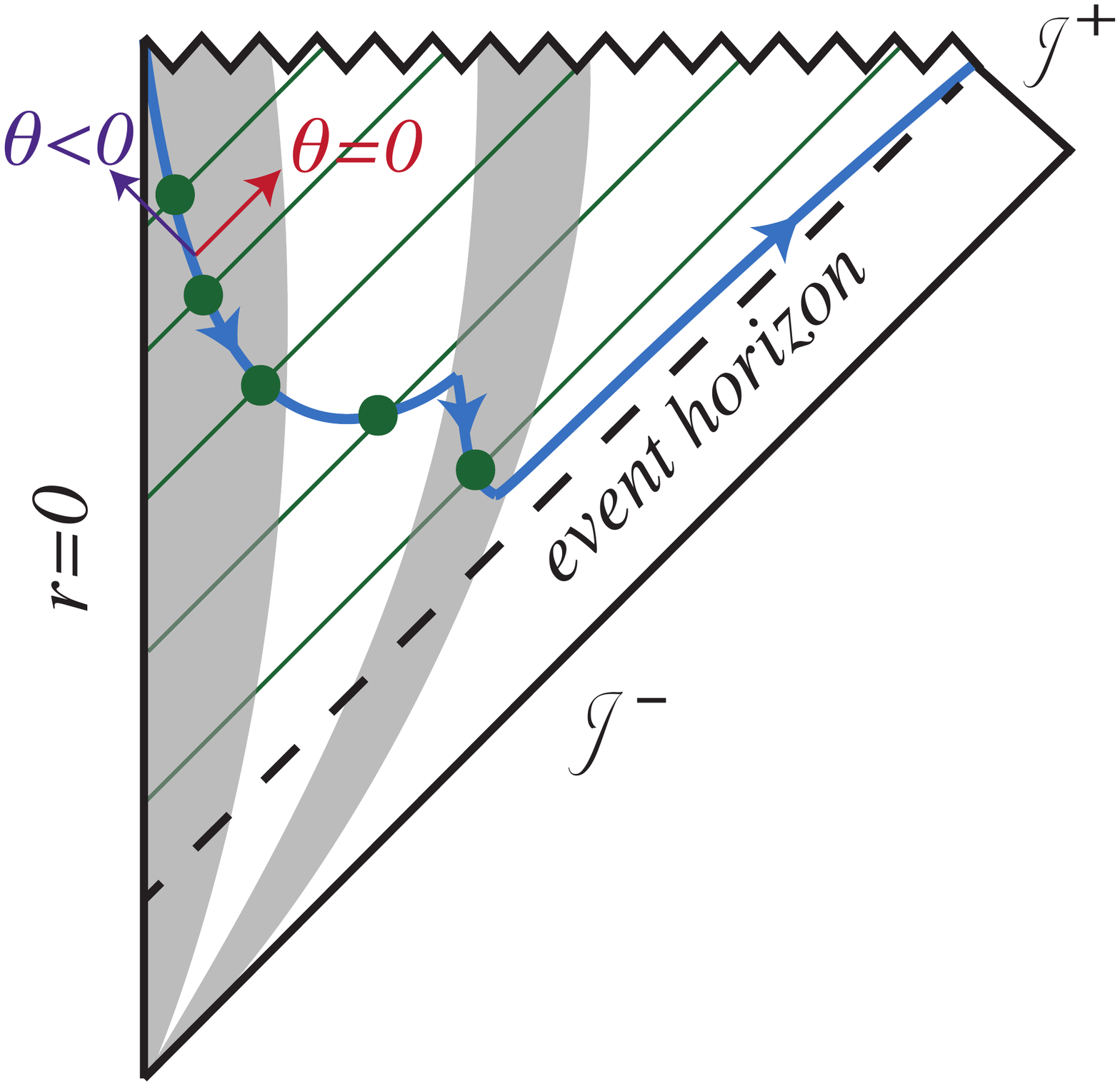}
}
\subfigure[]{
 \includegraphics[width=0.45 \textwidth]{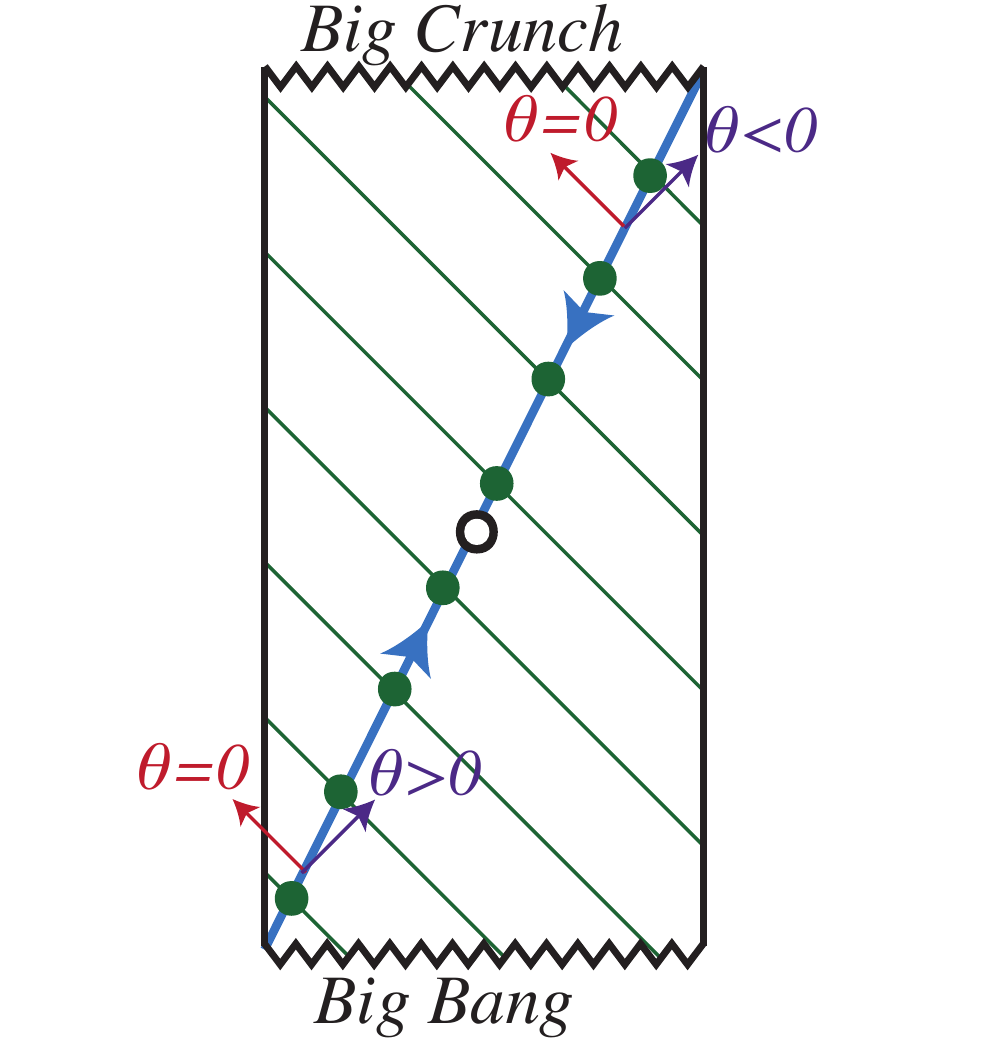}
}
\caption{Penrose diagrams showing examples of holographic screens. The green diagonal lines show a null slicing of the spacetime; green dots mark the maximal area sphere on each slice. These surfaces combine to form a holographic screen (blue lines); we prove that their area increases monotonically in a uniform direction on the screen (blue triangles). (a) A black hole is formed by collapse of a star (inner shaded region); later another massive shell collapses onto the black hole (outer shaded region). At all other times an arbitrarily small amount of matter accretes (white regions); this suffices to satisfy our generic conditions. The black hole interior contains a future holographic screen that begins at the singularity and asymptotes to the event horizon. It is timelike in the dense regions and spacelike in the dilute regions.  (b) In a closed universe filled with dust, marginally antitrapped spheres form a past holographic screen in the expanding region; its area increases towards the future. Marginally trapped spheres form a future holographic screen in the collapsing region; its area increases towards the past. The equator of the three-sphere at the turnaround time (black circle) belongs to neither the past nor the future screen; it is extremal in the sense of Ref.~\cite{HubRan07}.}
\label{fig-examplesPRD}
\end{figure*} 
A future holographic screen exists (nonuniquely) in generic spacetimes that have a future event horizon. It is disjoint from the event horizon but it may asymptote to it; see Fig.~\ref{fig-examplesPRD}a. Past holographic screens exist in expanding universes such as ours, regardless of whether they have a past event horizon. Because $H$ is not defined in terms of distant regions, past and future holographic screens can exist in spacetimes with no distant boundary at all, such as a recollapsing closed universe; see Fig.~\ref{fig-examplesPRD}b. Our area law applies to all future and past holographic screens.

\noindent{\bf Relation to previous work} The notion of future or past holographic screen has roots in two distinct bodies of research, which had not been connected until now. It can be regarded as a refinement of the notion of ``preferred holographic screen hypersurface''~\cite{CEB2}, which need not have monotonic area. Alternatively, it can be viewed as a generalization of the notion of ``dynamical horizon'', which obeys a straightforward area law but is not known to exist in many realistic solutions. We will now discuss these two connections for context and attribution; see also~\cite{BouEng15a}. We stress, however, that our theorem and proof are self-contained. They rely only on classical General Relativity, and not, for example, on any conjecture about semiclassical or quantum gravity.

First, let us discuss the relation to the holographic principle. (See~\cite{Tho93,Sus95,FisSus98} for earlier work and Ref.~\cite{RMP} for a review.) To an arbitrary codimension 2 spatial surface $B$, one can associate a light-sheet~\cite{CEB1}: a null hypersurface orthogonal to $B$ with everywhere nonpositive expansion (i.e., locally nonincreasing area), in the direction away from $B$. The covariant entropy bound (Bousso bound)~\cite{CEB1} is the conjecture that the entropy of the matter on the light-sheet cannot exceed the area of $B$, in Planck units. The conjecture has broad support; it has been proven in certain limiting regimes~\cite{FlaMar99,BouFla03,StrTho03,BouCas14a,BouCas14b}.

There are four null directions orthogonal to any surface. In each direction, the orthogonal null congruence generates a null hypersurface with boundary $B$. The expansion in opposing directions, such as future-outward and past-inward, differs only by a sign. In typical settings, therefore, there will be two directions with initially negative expansion, each of which gives rise to a light-sheet. For example, a sphere in Minkowski space admits light-sheets in the future and past inward directions, but not in the outward directions. A large enough sphere near the big bang is anti-trapped: it admits light-sheets in the past inward and outward directions. Spheres near the singularity of a black hole are trapped: the light-sheets point in the future inward and outward directions.

However, it is possible to find surfaces that are {\em marginal}: they have vanishing expansion in one opposing pair of null directions. Hence they admit a pair of light-sheets whose union forms an entire null slice of the spacetime~\cite{CEB1}. In fact, in strongly gravitating regions one can readily construct a continuous family of marginal surfaces, which foliate a hypersurface called ``preferred holographic screen hypersurface''. The opposing pairs of light-sheets attached to each leaf foliate the spacetime. The Bousso bound is particularly powerful when applied to these light-sheets. It constrains the entropy of the entire spacetime, slice by slice, in terms of the area of the leaves. All quantum information in the spacetime can be stored on the leaves, at no more than about one qubit per Planck area. In this sense the world is a hologram.

For event horizons, a classical area theorem~\cite{Haw71} preceded the interpretation of area as physical entropy~\cite{Bek72}. For holographic screens, the present work belatedly supplies a classical area law for an object whose relevance to geometric entropy had long been conjectured~\cite{CEB2}. What took so long?

In fact, the notion of ``preferred holographic screen hypersurface'' lacked a key refinement, without which our theorem would not hold: the distinction between past and future holographic screens. The leaves of a ``preferred holographic screen hypersurface'' are {\em marginal}, that is, one orthogonal null congruence has vanishing expansion. However, they were not required to be either marginally trapped, or marginally anti-trapped. That is, no definite sign was imposed on the expansion of the second, independent orthogonal null congruence. Fig.~\ref{fig-examplesPRD}b shows a spacetime in which a ``preferred holographic screen hypersurface'' fails to obey an area law. Once we distinguish between marginally trapped and anti-trapped surfaces, however, we recognize that there are in fact two disconnected objects: a past and a future holographic screen. Each obeys an area law, as our proof guarantees, but in different directions of evolution. This is analogous to the distinction between past and future event horizons. From this perspective, it is not surprising that ``preferred holographic screen hypersurfaces'' fail to satisfy an area law without the refinement we introduce here.

This brings us to the second body of research to which the present work owes debt. Previous attempts to find a quasi-local alternative to the event horizon culminated in the elegant notions of a future outer trapping horizon (FOTH)~\cite{Hay93,Hay97,HayMuk98} or dynamical horizon~\cite{AshKri02,AshKri03,AshGal} (see~\cite{AshKri04,Booth05} for reviews). In a generic, classical setting their definitions are equivalent: a dynamical horizon is a spacelike hypersurface foliated by marginally trapped surfaces. 

``Preferred holographic screen hypersurface'' was a weaker notion than future holographic screen; ``dynamical horizon'' is a stronger notion. It adds not only the crucial refinement from marginal to marginally trapped, but also the requirement that the hypersurface be spacelike. This immediately implies that the area increases in the outward direction~\cite{Hay93,AshKri02}. (Note the brevity of the proof of Theorem~\ref{thm-area} below, which alone would imply an area law without the need for any of the previous theorems, if a spacelike assumption is imposed.) 

However, our present work shows that the spacelike requirement is not needed for an area theorem. This is important, because the spacelike requirement is forbiddingly restrictive~\cite{BoothBrits}: no dynamical horizons are known to exist in simple, realistic systems such as a collapsing star or an expanding universe dominated by matter, radiation, and/or vacuum energy. 

Thus, the notion of a dynamical horizon (or of a FOTH) appears to be inapplicable in a large class of realistic regions in which gravity dominates the dynamics. We are not aware of a proof of nonexistence. But we show here that an area theorem holds for the more general notion of future holographic screen, whose existence is obvious and whose construction is straightforward in the same settings. Thus we see little reason for retaining the additional restriction to hypersurfaces of spacelike signature, at least in the context of the second law.

In the early literature on FOTHs/dynamical horizons, future holographic screens were already defined and discussed, under the name ``marginally trapped tube''~\cite{AshKri04}.\footnote{The definition of ``trapping horizon''~\cite{Hay93} excludes the junctions between inner and outer trapping horizons and thus precludes the consideration of such objects as a single hypersurface.} Ultimately, two separate area laws were proven, one for the spacelike and one for the timelike portions of the future holographic screens. These follow readily from the definitions.  The first, for FOTHs/dynamical horizons, was mentioned above. The second states that the area decreases toward the future along any single timelike portion (known as ``future inner trapping horizons''~\cite{Hay93} or ``timelike membranes''~\cite{AshKri04}). 

In these pioneering works, no unified area law was proposed for ``marginally trapped tubes''/future holographic screens. Perhaps this is because it was natural to think of their timelike portions as future directed and thus area-decreasing. Moreover, the close relation to ``preferred holographic screen hypersurfaces''~\cite{CEB2} was not recognized, so the area of leaves lacked a natural interpretation in terms of entropy.\footnote{It is crucial that the entropy associated with the area of leaves on a future holographic screen $H$ is taken to reside on the light-sheets of the leaves, as we assert, and not on $H$ itself.  The latter choice---called a ``covariant bound'' in Refs.~\cite{He:2007qd, He:2007dh, He:2007xd, He:2008em} but related to~\cite{BakRey99} and distinct from~\cite{CEB1}---is excluded by a counterexample~\cite{KalLin99} and would not lead to a valid Generalized Second Law.} And finally, it is not immediately obvious that an area law can hold once timelike and spacelike portions are considered together. Indeed, the central difficulty in the proof we present here is our demonstration that such portions can only meet in ways that uphold area monotonicity for the entire future holographic screen under continuous flow. A key element of our proof builds on relatively recent work~\cite{Wal10QST}.

There is an intriguing shift of perspective in a brief remark in later work by Booth {\em et al.}~\cite{BoothBrits}. After explicitly finding a ``marginally trapped tube'' (i.e., what we call a future holographic screen) in a number of spherically symmetric collapse solutions, the authors point out that it could be considered as a single object, rather than a collection of dynamical horizon/``timelike membrane'' pairs. They note that with this viewpoint the area increases monotonically in the examples considered. Our present work proves that this behavior is indeed general.

Analogues of a first law of thermodynamics have been formulated for dynamical horizons and trapping horizons. We expect that this can be extended to future holographic screens. However, here we shall focus on the second law and its classical manifestation as an area theorem.

\noindent{\bf Outline \ }  In Sec.~\ref{sec-screens}, we give a precise definition of future and past holographic screens, and we establish notation and nomenclature. We also describe a crucial mathematical structure derived from the foliation of $H$ by marginally (anti-)trapped leaves $\sigma(r)$: there exists a vector field $h^a$ tangent to $H$ and normal to its leaves, which can be written as a linear combination of the orthogonal null vector fields $k^a$ and $l^a$. Its integral curves are called fibers of $H$. 

It is relatively easy to see that the area of leaves is monotonic if $h^a l_a$ has definite sign, i.e., if $H$ evolves towards the past or exterior of each leaf. The difficulty lies in showing that it does so everywhere.

Our proof is lengthy and involves nontrivial intermediate results. Given an arbitrary two-surface $\sigma$ that splits a Cauchy surface into complementary spatial regions, we show in Sec.~\ref{sec-csss} that a null hypersurface $N(\sigma)\supset \sigma$  partitions the entire spacetime into two complementary spacetime regions: $K^+(\sigma)$, the future-and-interior of $\sigma$; and $K^-(\sigma)$, the past-and-exterior of $\sigma$.

In Sec.~\ref{sec-ssmono}, we consider a hypersurface foliated by Cauchy-splitting surfaces $\sigma(r)$. We prove that $K^+(r)$ grows monotonically under inclusion, if the surfaces $\sigma(r)$ evolve towards their own past-and-exterior. This puts on a rigorous footing the equivalence (implicit in the constructions of~\cite{CEB2}) between foliations of $H$ and null foliations of spacetime regions. The proofs in Sec.~\ref{sec-monosplit} do not use all of the properties of $H$; in particular they do not use the marginally trapped property of its leaves. Thus our results up to this point apply to more general classes of hypersurfaces.

In Sec.~\ref{sec-arealaw}, we do use the assumption that the leaves of $H$ are marginally trapped, and we combine it with the monotonicity of $K^+(r)$ that we established for past-and-exterior evolution. This allows us to show that the evolution of leaves $\sigma(r)$ on a future holographic screen $H$ must be {\em everywhere} to the past or exterior (assuming the null energy condition and certain generic conditions). This is the core of our proof. We then demonstrate that such evolution implies that the area $A(r)$ of $\sigma(r)$ increases strictly monotonically with $r$. 

We close Sec.~\ref{sec-arealaw} with a theorem establishing the uniqueness of the foliation of a given holographic screen. The holographic screens themselves are highly nonunique. For example, one can associate a past (future) holographic screen with any observer, by finding the maximal area surfaces on the past (future) light-cones of each point on the observer's worldline.

\section{Holographic Screens}
\label{sec-screens}

We assume throughout this paper that the spacetime is globally hyperbolic (with an appropriate generalization for asymptotically AdS geometries~\cite{Wal10QST,EngWal14}). We assume the null curvature condition (NCC): $R_{ab}k^{a}k^{b}\geq 0$ where $k^{a}$ is any null vector. In a spacetime with matter satisfying Einstein's equations this is equivalent to the null energy condition: $T_{ab}k^{a}k^{b}\geq 0$.

\begin{defn}
A {\em future holographic screen}~\cite{CEB2} (or {\em marginally trapped tube}~\cite{AshGal, AshKri04}) $H$ is a smooth
hypersurface admitting a foliation by marginally trapped surfaces called {\em leaves}.

A {\em past holographic screen} is defined similarly but in terms of marginally anti-trapped surfaces. Without loss of generality, we will consider future holographic screens in general discussions and proofs.

By {\em foliation} we mean that every point $p\in H$ lies on exactly one leaf.  A {\em marginally trapped surface} is a codimension 2 compact spatial surface $\sigma$ whose two future-directed orthogonal null geodesic congruences satisfy
\begin{eqnarray}
\theta_{k} &=& 0 \label{marginal}~,\\
\theta_{l} &<&  0 \label{trapped}~.
\end{eqnarray} 
The opposite inequality defines ``marginally anti-trapped'', and thus, past holographic screens. Here $\theta_{k}= \hat{\nabla}_a k^a$ and $\theta_{l} = \hat{\nabla}_a l^a$ are the null expansions~\cite{Wald} (where $\hat{\nabla}_{a}$ is computed with respect to the induced metric on $\sigma$), and $k^{a}$ and $l^{a}$ are the two future directed null vector fields orthogonal to $\sigma$.

We will refer to the $k^a$ direction as {\em outward} and to the $l^a$ direction as {\em inward}. For screens in asymptotically flat or AdS spacetimes, these notions agree with the intuitive ones. Furthermore, in such spacetimes any marginally trapped surface, and hence any holographic screen, lies behind an event horizon. However, holographic screens may exist in cosmological spacetimes where an independent notion of outward, such as conformal infinity, need not exist (e.g., a closed FRW universe).  In this case the definition of $H$ requires only that there exist some continuous assignment of $k^a$ and $l^a$ on $H$ such that all leaves are marginally trapped. See Fig.~\ref{fig-examplesPRD} for examples of holographic screens.
\end{defn}

\begin{defn}
The defining foliation of $H$ into leaves $\sigma$ determines a $(D-2)$-parameter family of leaf-orthogonal curves $\gamma$, such that every point $p\in H$ lies on exactly one curve that is orthogonal to $\sigma(p)$. We will refer to this set of curves as the {\em fibration of $H$}, and to any element as a {\em fiber} of $H$.
\end{defn}

\begin{conv} \label{conv-rh}
Thus it is possible to choose a (non-unique) {\em evolution parameter} $r$ along the screen $H$ such that $r$ is constant on any leaf and increases monotonically along the fibers $\gamma$. We will label leaves by this parameter: $\sigma(r)$. 

The tangent vectors to the fibers define a vector field $h^a$ on $H$. For any choice of evolution parameter the normalization of this vector field can be fixed by requiring that the function $r$ increases at unit rate along $h^a$: $h(r)=h^a\, (dr)_a=1$. (Since $H$ can change signature, unit normalization of $h^a$ would be possible only piecewise, and hence would not be compatible with the desired smoothness of $h^a$.)
\end{conv}

\begin{rem}
Since fibers are orthogonal to leaves, a tangent vector field $h^a$ can be written as a (unique) linear combination of the two null vector fields orthogonal to each leaf:
\begin{equation}
h^{a} = \alpha l^{a} + \beta k^{a}
\label{eq-hlk}
\end{equation}
Moreover, the foliation structure guarantees that $h^a$ vanishes nowhere: it is impossible to have $\alpha=\beta=0$ anywhere on $H$. (These remarks hold independently of the requirement that each leaf be marginally trapped.) \label{rem-h}
\end{rem}
\begin{figure}[ht]

\includegraphics[width=2in]{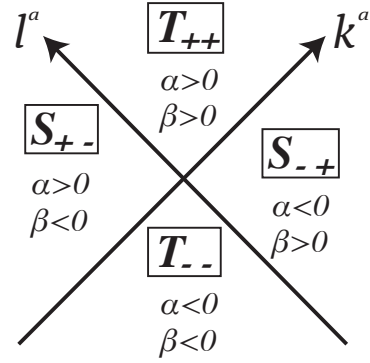}
\caption{The null vectors $l^{a}$ and $k^{a}$ orthogonal to a leaf $\sigma$ of the foliation of $H$ at some point. The evolution of $H$ is characterized by vector $h^a$ normal to the leaves and tangent to $H$. Depending on the quadrant $h^a$ points to, $H$ evolves locally to the future, exterior, past, or interior (clockwise from top).}
\label{fig-st}

\end{figure}

\begin{conv} \label{conv-stnotation}
As shown in Fig.~\ref{fig-st}, $h^a$ is spacelike and outward-directed if $\alpha<0, \beta>0$; timelike and past-directed if $\alpha<0, \beta<0$; spacelike and inward-directed if $\alpha>0, \beta<0$; and finally, timelike and future-directed if $\alpha>0, \beta>0$. We denote such regions, in this order (and somewhat redundantly): $S_{-+},T_{--},S_{+-},T_{++}$. 
\end{conv}
\begin{rem} \label{rem-result}
Our key technical result below will be to demonstrate that $\alpha$ cannot change sign on $H$. Thus on a given screen $H$, either only the first two, or only the second two possibilities are realized. (The latter case can be reduced to the former by taking $r\to -r$.)
\end{rem}
\begin{rem} \label{rem-borders}
Because $\alpha$ and $\beta$ cannot simultaneously vanish, $S_{+-}$ and $S_{--}$ regions cannot share a boundary or be separated by a null region; they must be separated by a timelike region. Similarly $T_{++}$ and $T_{--}$ regions must be separated by a spacelike region.
\end{rem}

Below we will consider only holographic screens that satisfy additional technical assumptions:
\begin{defn} \label{def-technical}
A holographic screen $H$ is {\em regular} if
\begin{enumerate}[(a)] 
\item \label{def-technical1} the {\em first generic condition} is met, that $R_{ab} k^a k^b + \varsigma_{ab}\varsigma^{ab}>0$ everywhere on $H$, where $\varsigma_{ab}$ is the shear of the null congruence in the $k^a$-direction;
\item \label{def-technical2} the {\em second generic condition} is met: let $H_+$, $H_-$, $H_0$ be the set of points in $H$ with, respectively, $\alpha>0$, $\alpha<0$, and $\alpha=0$. Then $H_0= \dot H_- = \dot H_+$. 
\item \label{def-technical3} every inextendible portion $H_i\subset H$ with definite sign of $\alpha$ either contains a complete leaf, or is entirely timelike.
\item \label{def-technical4} every leaf $\sigma$ splits a Cauchy surface $\Sigma$ into two disjoint portions $\Sigma^\pm$.
\end{enumerate} 
\end{defn}
Analogous assumptions have been used in the more restricted context of dynamical horizons. The first generic condition is identical to the regularity condition of~\cite{AshGal}. Together with the null curvature condition, $R_{ab}k^a k^b\geq 0$, it ensures that the expansion of the $k^a$-congruence becomes negative away from each leaf. The second generic condition excludes the degenerate case where $\alpha$ vanishes along $H$ without changing sign. Either condition excludes the existence of an open neighborhood in $H$ with $\alpha=0$. Both are aptly called ``generic'' since they can fail only in situations of infinitely fine-tuned geometric symmetry and matter distributions. 
The third assumption is substantially weaker than the definition of a dynamical horizon, since we do not require global spacelike signature of $H$. The fourth assumption will play a role analogous to the assumption of achronality of the dynamical horizon. It holds in typical spacetimes of interest (including settings with nontrivial spatial topology, such as $S^1\times S^2$, as long as the holographic screen is sufficiently localized on the sphere). We leave the question of relaxing some or all of these assumptions to future work.

\begin{rem} \label{rem-s0exists}
Assumption~\ref{def-technical}.c and Remark~\ref{rem-borders} imply that $H$ contains at least one complete leaf with definite sign of $\alpha$.
\end{rem}
\begin{conv}
\label{conv-orient}
Let $\sigma(0)\subset H$ be an arbitrary leaf with definite sign of $\alpha$. We will take the parameter $r$ to be oriented so that $\alpha< 0$ on $\sigma(0)$, and we take $r=0$ on $\sigma(0)$. By convention \ref{conv-rh} this also determines the global orientation of the vector field $h^a$. For past holographic screens, it is convenient to choose the opposite convention, $\alpha>0$ on $\sigma(0)$.
\end{conv}

\section{Leaves Induce a Monotonic Spacetime Splitting}
\label{sec-monosplit}

In this section, we will use only a subset of the defining properties of a holographic screen. In Sec.~\ref{sec-csss}, we examine the implications of Assumption~\ref{def-technical}.d, that each leaf split a Cauchy surface. We show that a null surface orthogonal to such a leaf splits the entire spacetime into two disconnected regions  $K^\pm(\sigma)$.

In Sec.~\ref{sec-ssmono}, we use the foliation property of the holographic screen. (However, nowhere in this section do we use the condition that each leaf be marginally trapped, or Assumptions~\ref{def-technical}.a-c.) We show that in portions of $H$ where $\alpha$ is of constant sign, the sets $K^\pm(\sigma(r))$ satisfy inclusion relations that are monotonic in the evolution parameter $r$. 

Together these results imply that an $\alpha<0$ foliation of any hypersurface $H$ into Cauchy-splitting surfaces $\sigma$ induces a null foliation of the spacetime, such that each null hypersurface $N(\sigma)$ splits the entire spacetime into disconnected regions $K^\pm(\sigma)$. 

In the following section, we will add the marginally trapped condition and the remaining technical assumptions, to show that on a future holographic screen, $\alpha$ {\em must}\/ have constant sign.

\subsection{From Cauchy Splitting to Spacetime Splitting}
\label{sec-csss}

By Assumption~\ref{def-technical}.d, every leaf $\sigma$ splits a Cauchy surface $\Sigma$ into two disconnected portions $\Sigma^+$ and $\Sigma^-$:
\begin{equation}
\Sigma = \Sigma^+ \cup \sigma \cup \Sigma^-~,~~\sigma = \dot\Sigma^\pm~.
\end{equation}
We take $\Sigma^\pm$ to be open in the induced topology on $\Sigma$, so that $\Sigma^\pm\cap\sigma=\varnothing$.
We consider the following sets shown in Fig.~\ref{fig-sets}a:
\begin{itemize} 
\item $I^+(\Sigma^+)$, the chronological future of $\Sigma^+$: this is the set of points that lie on a timelike future-directed curve starting at $\Sigma^+$. (Note that this set does not include $\Sigma^+$.)
\item $D^-(\Sigma^+)$, the past domain of dependence of $\Sigma^+$: this is the set of points $p$ such that every future-directed causal curve through $p$ must intersect $\Sigma^+$. (This set does include $\Sigma^+$.)
\item Similarly, we consider $I^-(\Sigma^-)$ and $D^+(\Sigma^-)$.
\end{itemize} 

\begin{figure*}[ht]

\subfigure[ ]{

\includegraphics[height=0.4 \textwidth]{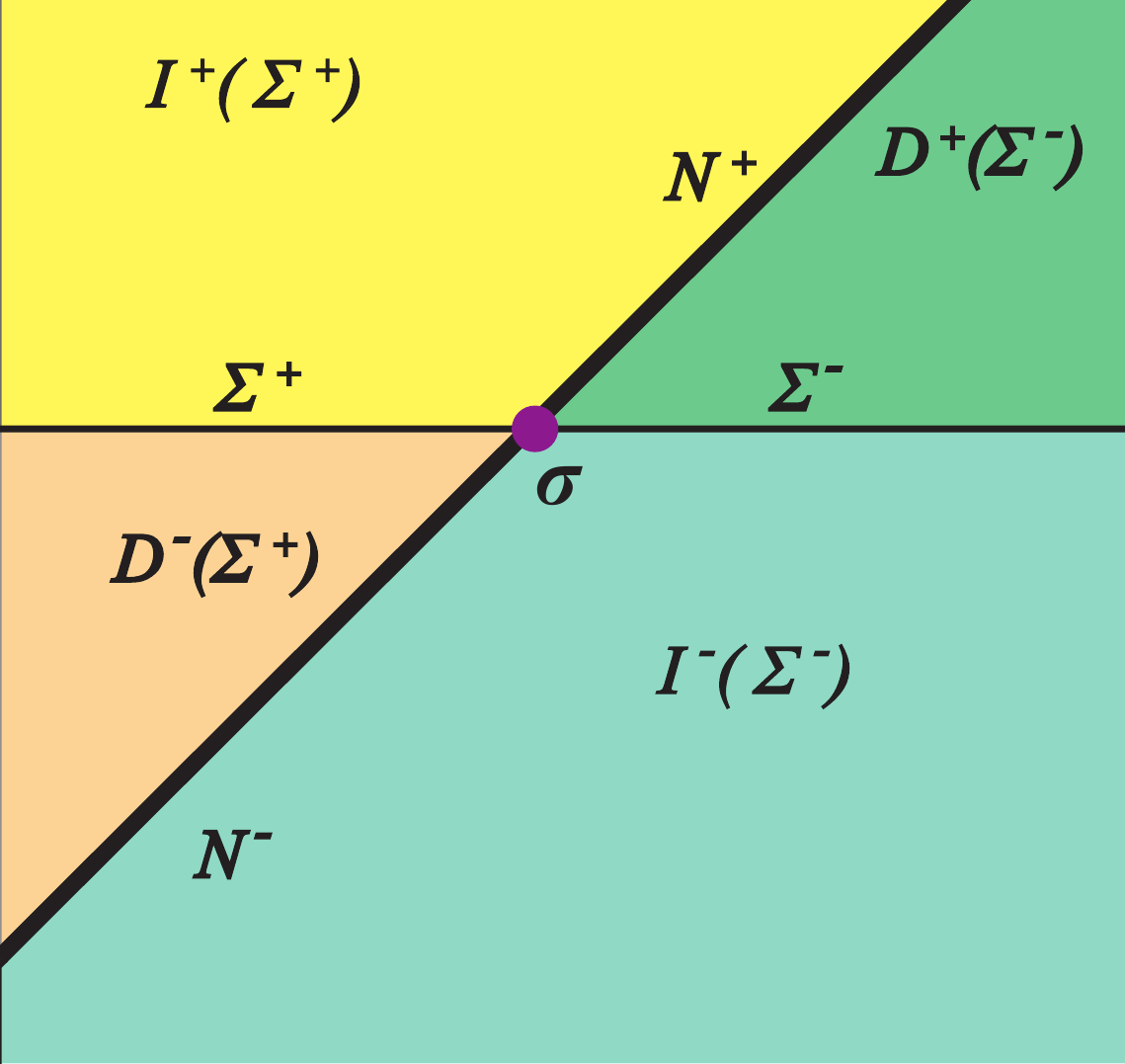}
\label{fig-sets-a}}
\qquad
\subfigure[]{
\includegraphics[height=0.4\textwidth]{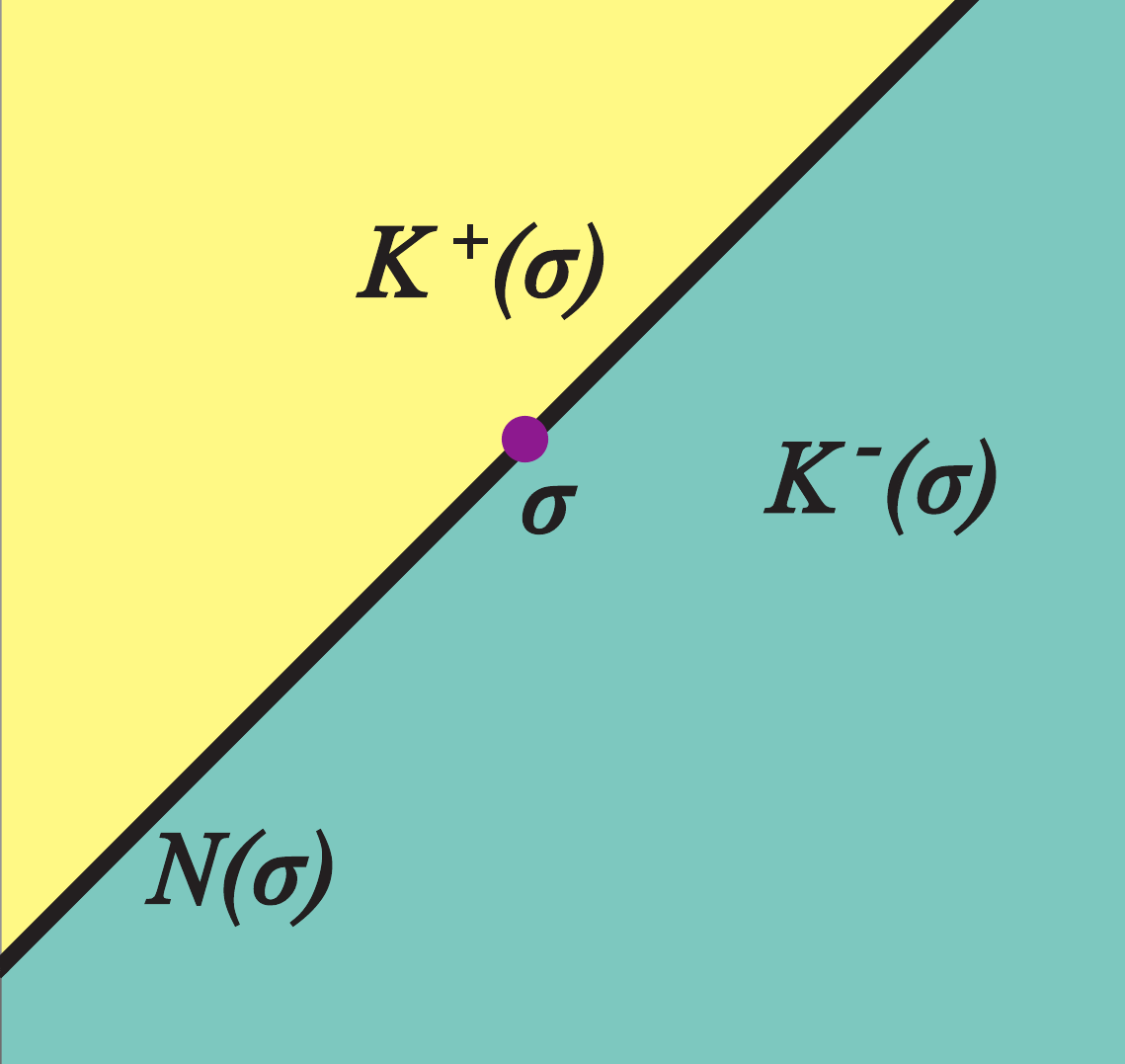}
\label{fig-sets-b}}
\caption{(a) Each leaf $\sigma$ splits a Cauchy surface. This defines a partition of the entire spacetime into four regions, given by the past or future domains of dependence and the chronological future or past of the two partial Cauchy surfaces. (b) The pairwise unions $K^\pm$ depend only on $\sigma$, not on the choice of Cauchy surface. They can be thought as past and future in a null foliation defined by the lightsheets $N$.}
\label{fig-sets}
\end{figure*}

\begin{defn}\label{def-sets}
From the Cauchy-splitting property of $\sigma$, it follows\footnote{The proofs of the following statements are straightforward and use only well-known properties of $I^\pm$ and $D^\pm$.} that the four sets defined above have no mutual overlap. However they share null boundaries:
\begin{align} 
N^+(\sigma) & \equiv & \dot I^+(\Sigma^+) -\Sigma^+ = \dot D^+(\Sigma^-) - I^-(D^+(\Sigma^-)) \\
N^-(\sigma) & \equiv & \dot I^-(\Sigma^-) - \Sigma^- = \dot D^-(\Sigma^+) - I^+(D^-(\Sigma^+))
\end{align} 
Note that $N^+(\sigma)\cap N^-(\sigma)=\sigma$. We define
\begin{eqnarray} 
K^+(\sigma) & \equiv & I^+(\Sigma^+) \cup D^-(\Sigma^+)-N^+(\sigma)~; \\
K^-(\sigma) & \equiv & D^+(\Sigma^-) \cup I^-(\Sigma^-)-N^-(\sigma)~; \\
N(\sigma) &\equiv & N^+(\sigma) \cup N^-(\sigma)
\end{eqnarray}
Thus
\begin{equation}
N(\sigma) = \dot K^+(\sigma) = \dot K^-(\sigma)~;
\end{equation}
and the sets $N$, $K^+$, and $K^-$ provide a partition of the spacetime (Fig.~\ref{fig-sets}b). 
\end{defn}

\begin{lem}\label{lem-secondN}
There exists an independent characterization of $N^+$, $N^-$, and thus of $N$:
$N^+(\sigma)$ is generated by the future-directed null geodesic congruence orthogonal to $\sigma$ in the $\Sigma^-$ direction up to intersections: $p\in N^+(\sigma)$ if and only if no conjugate point or nonlocal intersection with any other geodesic in the congruence lies between $\sigma$ and $p$.
\end{lem}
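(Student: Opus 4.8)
The plan is to establish the two halves of the equivalence separately: the forward implication ``$p\in N^+(\sigma)\Rightarrow$ no caustic before $p$'' from the standard structure theory of achronal boundaries, and the converse by a connectedness argument along each geodesic of the congruence. Before either, I would record the ambient facts. Since $\Sigma$ is a Cauchy surface and $\Sigma^+$ is open in $\Sigma$, the set $S\equiv\Sigma^+\cup\sigma$ is closed in $M$ and achronal, with $\mathrm{edge}(S)=\sigma$: near an interior point of $\Sigma^+$ every short timelike curve running from past to future must pierce $\Sigma^+$, whereas near a point of $\sigma$ such a curve can be pushed onto the $\Sigma^-$ side. Moreover $I^+(S)=I^+(\Sigma^+)$, and the elementary facts cited in Definition~\ref{def-sets} give $\dot I^+(\Sigma^+)\cap\Sigma=S$, $S\subseteq\dot I^+(\Sigma^+)$, and $\Sigma^-\cap\overline{I^+(\Sigma^+)}=\varnothing$. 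By the standard achronal-boundary theorem (Penrose), $\dot I^+(S)$ is an achronal $C^{1-}$ hypersurface, and through each $p\in\dot I^+(S)\setminus S$ runs a null geodesic lying in $\dot I^+(S)$ that is either past-inextendible or has a past endpoint on $\mathrm{edge}(S)=\sigma$.

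For the forward implication I would first rule out past-inextendibility: by global hyperbolicity such a geodesic meets $\Sigma$, necessarily inside $\dot I^+(\Sigma^+)\cap\Sigma=S$, and continued to its past it enters $I^-(\Sigma)$, which is disjoint from $\overline{I^+(\Sigma^+)}\supseteq\dot I^+(\Sigma^+)$---a contradiction. Hence the generator through $p$ has a past endpoint $q\in\sigma$. A short exponential-map estimate then forces it to leave $q$ orthogonally to $\sigma$---a tangential component would make nearby points of the geodesic timelike-related to nearby points of $\sigma$, contradicting achronality of $\dot I^+(\Sigma^+)$---and on the $\Sigma^-$ side, since the other null normal immediately enters $I^+(\Sigma^+)$. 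So it is a geodesic of the congruence of Definition~\ref{def-sets}. Finally, a null geodesic that generates the achronal boundary $\dot I^+(\Sigma^+)$ can carry no point conjugate to $\sigma$, and no nonlocal intersection with another generator, strictly before $p$---either would deflect it into the interior $I^+(\Sigma^+)$---and by the step just completed every such generator is a congruence geodesic. This is exactly the stated forward implication.

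For the converse, let $\gamma$ with $\gamma(0)=q\in\sigma$ be a congruence geodesic such that $\gamma|_{[0,s_0)}$ carries no point conjugate to $\sigma$ and no nonlocal intersection with another congruence geodesic; I must show $p\equiv\gamma(s_0)\in N^+(\sigma)$. As $\gamma$ is future causal and $\gamma(0)\in\sigma\subseteq\overline{I^+(\Sigma^+)}=J^+(S)$, which is future-closed, the whole segment lies in $\overline{I^+(\Sigma^+)}=I^+(\Sigma^+)\sqcup\dot I^+(\Sigma^+)$. Suppose for contradiction $p\in I^+(\Sigma^+)$. Since $I^+(\Sigma^+)$ is open and future-closed and does not contain $\gamma(0)$, there is $s_1\in(0,s_0)$ with $\gamma|_{[0,s_1]}\subseteq\dot I^+(\Sigma^+)$ and $\gamma((s_1,s_0])\subseteq I^+(\Sigma^+)$. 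Then $\gamma|_{[0,s_1]}$ is a null generator of $\dot I^+(\Sigma^+)$ with a future endpoint at $\gamma(s_1)$ on the boundary; but a generator of $\dot I^+(\Sigma^+)$ can acquire a future endpoint only at a caustic---a point conjugate to $\mathrm{edge}(S)=\sigma$, or a nonlocal intersection with another generator, hence (again) with a congruence geodesic---so $\gamma|_{[0,s_0)}$ carries a caustic, contradicting the hypothesis. Therefore $p\in\dot I^+(\Sigma^+)$, and $p\notin S$ since $\gamma$ has left the Cauchy slice, so $p\in N^+(\sigma)$.

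The step I expect to be the main obstacle is the last one in the converse: proving that a generator of $\dot I^+(\Sigma^+)$ can leave the boundary only at a caustic or a genuine crossover of the orthogonal congruence, and that the focal notion appearing there is precisely the ``point conjugate to $\sigma$'' of the statement. This requires the detailed structure of achronal boundaries and careful tracking of edge-focal versus point-conjugate points; by contrast, the orthogonality and $\Sigma^-$-side identifications in the forward direction, and the exclusion of past-inextendible generators, are routine.
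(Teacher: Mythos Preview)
The paper does not actually prove Lemma~\ref{lem-secondN}; it only asserts that it ``follows from a significantly strengthened version of Theorem~9.3.11 in Ref.~\cite{Wald}, a proof of which will appear elsewhere.'' So your proposal already goes well beyond what the paper supplies, and your overall architecture---achronal-boundary structure for the forward direction, a connectedness/last-exit argument along each congruence geodesic for the converse---is the natural route and is essentially the content of the ``strengthened 9.3.11'' the authors invoke.

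Your forward direction is fine. In the converse, you correctly isolate the real difficulty: showing that a generator of $\dot I^+(\Sigma^+)$ can acquire a future endpoint only at a focal point of $\sigma$ or at a nonlocal intersection with another geodesic of the \emph{same} ($\Sigma^-$-directed) congruence. One subtlety you should make explicit is that this is a statement about $\dot I^+(\Sigma^+)$, not $\dot I^+(\sigma)$: a priori the generator might leave $\dot I^+(\Sigma^+)$ because some point of the interior $\Sigma^+$ (rather than of $\sigma$) can see it timelike. The standard null cut-locus dichotomy is usually proved for the distance function from the compact submanifold $\sigma$; to transplant it here you should run the limit-of-maximizers argument from $\overline{\Sigma^+}$: for $s\searrow s_1$ take maximal causal curves $\eta_s$ from $\overline{\Sigma^+}$ to $\gamma(s)$, pass to a limit null geodesic $\eta$ to $\gamma(s_1)$, and then rule out that $\eta$ starts in $\Sigma^+$ or leaves $\sigma$ in the $\Sigma^+$-direction (either would force $\gamma(s_1)\in I^+(\Sigma^+)$ by the push-up property $I^+\circ J^+=I^+$, contradicting $\gamma(s_1)\in\dot I^+(\Sigma^+)$). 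This pins $\eta$ to the $\Sigma^-$-directed congruence, after which the usual dichotomy---$\eta\neq\gamma$ gives a crossover, $\eta=\gamma$ with no focal point on $[0,s_1]$ contradicts local maximality---completes the step you flagged. With that addition, your proof is complete and is exactly the kind of argument the paper defers to future work.
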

This follows from a significantly strengthened version of Theorem 9.3.11 in Ref.~\cite{Wald}, a proof of which will appear elsewhere. Similarly $N^-$ is generated by the past-directed $\sigma$-orthogonal null congruence towards $\Sigma^+$. (Hence if $\sigma$ is marginally trapped then $N^\pm$ both are light-sheets of $\sigma$~\cite{CEB1}.)

\begin{cor}\label{cor-onlysigma}
Lemma~\ref{lem-secondN} implies that $N$ depends only on $\sigma$, not on the Cauchy surface $\Sigma$.  Moreover, the sets $K^+$ and $K^-$ are then uniquely fixed by the fact that $N$ splits the spacetime: $K^+$ is the largest connected set that contains $I^+(N)$ but not $N$.
\end{cor}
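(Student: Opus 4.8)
\emph{Proof strategy.} The first assertion is essentially a restatement of Lemma~\ref{lem-secondN} together with the definitions. That lemma describes $N^+(\sigma)$ and $N^-(\sigma)$ purely in terms of the null geodesic congruences orthogonal to $\sigma$, cut off at conjugate points and nonlocal intersections; the only place a Cauchy surface enters is in selecting, among the two null normal directions of $\sigma$, the one pointing towards $\Sigma^-$ (for $N^+$) or towards $\Sigma^+$ (for $N^-$). Since $\sigma$ is compact and connected, any admissible Cauchy surface is split by $\sigma$ into partial surfaces lying on definite, opposite transverse sides of $\sigma$, so with the orientation of Convention~\ref{conv-orient} fixed this selects one and the same null direction independently of the Cauchy surface chosen. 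Hence the two congruences, their conjugate points and their nonlocal intersection loci—and therefore $N^+$, $N^-$ and $N$—depend on $\sigma$ (and the metric) alone.

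For the second assertion I would begin with topological bookkeeping. By Definition~\ref{def-sets}, $N$ is closed, $\{K^+,N,K^-\}$ partitions $M$, and $N=\dot K^+=\dot K^-$. The last relation gives $\overline{K^\pm}\subseteq K^\pm\cup N$, so $K^+$ and $K^-$ are each open and closed in $M-N$; since $N$ is closed, $K^+$ and $K^-$ are in fact open in $M$, and each is a union of connected components of $M-N$. It then suffices to establish: (i) $N$ is achronal (equivalently, $I^+(N)\cap N=\varnothing$); (ii) $I^+(N)\cap K^-=\varnothing$; and (iii) $K^+$ is connected. Granting these, $I^+(N)\subseteq M-N-K^-=K^+$ and $I^+(N)\supseteq I^+(\sigma)\neq\varnothing$; any connected set $X$ with $I^+(N)\subseteq X$ and $X\cap N=\varnothing$ lies in $M-N=K^+\sqcup K^-$ and, being connected and meeting $K^+$, is contained in $K^+$; and $K^+$ itself is such an $X$, hence the unique largest one. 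The set $K^-=M-(N\cup K^+)$ is then fixed as well.

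Steps (ii) and (iii) are routine. For (ii), given $y\in I^+(N^+)$ choose $x\in N^+$ with $x\ll y$ and use the alternative form $N^+=\dot D^+(\Sigma^-)-I^-(D^+(\Sigma^-))$: if $y\in K^-\subseteq I^-(\Sigma^-)\cup D^+(\Sigma^-)$, then either $x\in I^-(\Sigma^-)$—impossible since $N^+\subseteq\overline{I^+(\Sigma^+)}\subseteq\Sigma\cup I^+(\Sigma)$ is disjoint from $I^-(\Sigma)$—or $x\in I^-(D^+(\Sigma^-))$, contradicting $x\in N^+$. The case $y\in I^+(N^-)$ is the mirror image, using $N^-=\dot I^-(\Sigma^-)-\Sigma^-$ (so $I^-(x)\subseteq I^-(\Sigma^-)$) together with $N^-\subseteq\Sigma\cup I^-(\Sigma)$. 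For (iii), Assumption~\ref{def-technical}.d says $\sigma$ splits $\Sigma$ into exactly two pieces, so $\Sigma^+$ is connected; then $I^+(\Sigma^+)$ is connected, $D^-(\Sigma^+)$ is connected (each of its points is joined to $\Sigma^+$ by a causal curve lying in $D^-(\Sigma^+)$), the two overlap near the interior of $\Sigma^+$, and $K^+$—an open subset of the union obtained by deleting a closed subset of its boundary—remains connected.

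The substantive step, and the one I expect to be the main obstacle, is (i). Each of $N^+=\dot I^+(\Sigma^+)$ and $N^-=\dot I^-(\Sigma^-)$ is separately achronal, being the boundary of a future (resp.\ past) set; the content of (i) is that welding the two along $\sigma$ does not destroy achronality. This is exactly what the geodesic description of Lemma~\ref{lem-secondN} controls. All dangerous configurations reduce to a putative timelike curve from some $p\in N^-\setminus\sigma$ to some $q\in N^+\setminus\sigma$ (configurations with an endpoint on $\sigma$ are excluded directly: e.g.\ $p\in\sigma\subseteq\overline{\Sigma^+}$ and $p\ll q$ would put $q\in I^+(\Sigma^+)$, contradicting $q\in\dot I^+(\Sigma^+)$). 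By Lemma~\ref{lem-secondN}, $q$ lies on a future-directed null generator of $N^+$ from some $x_q\in\sigma$ with no conjugate point or nonlocal intersection before $q$, and $p$ lies on the time-reverse of such a generator of $N^-$ from some $x_p\in\sigma$; at $\sigma$ both generators carry the same transverse null normal, so when $x_p=x_q$ they form a single null geodesic through $\sigma$ along which there is no conjugate point or intersection between $p$ and $q$, forcing $q\notin I^+(p)$. Ruling out the remaining case $x_p\neq x_q$—equivalently, showing that $N$ is globally, not merely locally at $\sigma$, an achronal null hypersurface—is the real work, and is where the strengthened version of Theorem~9.3.11 of Ref.~\cite{Wald} underlying Lemma~\ref{lem-secondN} is needed; everything else is formal manipulation of $I^\pm$, $D^\pm$ and connectedness.
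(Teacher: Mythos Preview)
The paper offers no proof of this corollary at all; it is simply asserted as a consequence of Lemma~\ref{lem-secondN} and the preceding definitions. Your decomposition into steps (i)--(iii) is sound, your handling of the first assertion is correct, and your arguments for (ii) and (iii) are essentially right (the $I^+(N^-)$ half of (ii) is a bit muddled as written, but the same case analysis you use for $I^+(N^+)$ goes through).

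Where you overreach is in step~(i). You flag the achronality of $N$ as the ``substantive step'' and propose to establish it via the null-generator description of Lemma~\ref{lem-secondN}, invoking the strengthened Theorem~9.3.11 for the $x_p\neq x_q$ case. In fact achronality follows directly from the $\Sigma$-based definitions by an elementary crossing argument, with no appeal to generators. Suppose $p\in N^-$, $q\in N^+$, and $p\ll q$; extend a timelike curve from $p$ to $q$ to an inextendible one and let $s$ be its unique intersection with the Cauchy surface $\Sigma$. Since $N^-\subseteq\overline{I^-(\Sigma^-)}\subseteq\Sigma\cup I^-(\Sigma)$ and $N^+\subseteq\Sigma\cup I^+(\Sigma)$, the point $s$ must lie between $p$ and $q$ (inclusive) on the curve. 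If $s\in\bar\Sigma^+$ then $q\in I^+(\bar\Sigma^+)=I^+(\Sigma^+)$, contradicting $q\in\dot I^+(\Sigma^+)$; if $s\in\Sigma^-$ then $p\in I^-(\Sigma^-)$, contradicting $p\in\dot I^-(\Sigma^-)$; the endpoint cases $s=p$ or $s=q$ reduce to the same contradictions. Thus Lemma~\ref{lem-secondN} is needed \emph{only} for the first assertion---that $N$ is independent of $\Sigma$---and once that is secured, the characterization of $K^+$ really is, as you say at the end, pure $I^\pm$/$D^\pm$ bookkeeping plus connectedness, without any detour through the geometry of generators.
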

Thus our use of $\sigma$ (as opposed to $\Sigma^+$ and/or $\Sigma^-$) as the argument of the sets $K^\pm$, $N^\pm$ is appropriate. 

\subsection{Monotonicity of the Spacetime Splitting}
\label{sec-ssmono}

Until now, we have only used the Cauchy-splitting property of $\sigma$. We will now consider a family of such leaves, $\sigma(r)$, that foliate a hypersurface ${\cal H}$. (We use this notation instead of $H$, in order to emphasize that ${\cal H}$ need not satisfy the additional assumptions defining a future holographic screen.) A tangent vector field $h^a$ can be defined as described in Remark~\ref{rem-h}, with decomposition $h^a = \alpha l^a + \beta k^a$ into the null vectors orthogonal to each leaf. We take $\Sigma^+$ to be the side towards which the vector $l^a$ points. (This convention anticipates Sec.~\ref{sec-arealaw}. In the current section, $k^a$ and $l^a$ need not be distinguished by conditions on the corresponding expansions.) To simplify notation, we denote $K^+(\sigma(r))$ as $K^+(r)$, etc.

\begin{thm}\label{thm-kmono}
Consider a foliated hypersurface ${\cal H}$ with tangent vector field $h^a$ defined as above. Suppose that $\alpha<0$ on all leaves $\sigma(r)$ in some open interval, $r\in I$. Then
\begin{equation}
\bar K^+(r_1)\subset K^+(r_2)~,
\label{eq-kmono}
\end{equation}
or equivalently $K^-(r_1)\supset \bar K^-(r_2)$, for all $r_1, r_2 \in I$ with $r_1<r_2$. That is, the sets $K^\pm(r)$ are monotonic in $r$ under inclusion; and the monotonicity is strict in the strong sense that the entire boundary  $N(r_1)$ of the open set $K^+(r_1)$ is contained in the open set $K^+(r_2)$.\footnote{It is not difficult to strengthen this theorem by proving the converse. However this requires using assumption \ref{def-technical}.b which is used nowhere else in this Section. Moreover, the converse is not needed in this paper.}
\end{thm}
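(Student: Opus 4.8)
The plan is to establish the inclusion $\bar K^+(r_1) \subset K^+(r_2)$ for $r_1 < r_2$ by an interpolation/continuity argument along the foliation, using the null geodesic characterization of $N$ from Lemma~\ref{lem-secondN} as the essential tool. First I would reduce the claim to an infinitesimal statement: since the leaves $\sigma(r)$ sweep out ${\cal H}$ continuously and $K^+$ is open, it suffices to show that for each $r_0 \in I$ there is an $\epsilon>0$ such that $\bar K^+(r) \subset K^+(r')$ whenever $r_0-\epsilon < r < r' < r_0+\epsilon$; the global statement then follows by a standard connectedness argument on $I$ (the set of $r_2$ for which $\bar K^+(r_1)\subset K^+(r_2)$ is open by the infinitesimal step and closed because $\bigcap_{r_2 > s}\bar K^+(r_2)$-type limits of the boundaries $N(r_2)$ stay on the correct side). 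So the work is local in $r$.

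For the infinitesimal step, the key geometric input is that $\alpha<0$ means $h^a = \alpha l^a + \beta k^a$ points to the past-and-exterior of each leaf: its $l^a$-component is negative (toward $\Sigma^-$, i.e.\ past-inward is excluded and we move exterior/past), so as $r$ increases the leaf $\sigma(r')$ lies strictly in the past-and-exterior region $K^-(r)$ of the earlier leaf — or, more precisely, off of $N(r)$ and on the $K^-$ side. I would make this precise by showing $\sigma(r') \subset K^-(r)$ for $r'$ slightly larger than $r$: a point $p\in\sigma(r')$ is reached from $\sigma(r)$ along a fiber whose tangent $h^a$ lies in the open quadrant spanned by $-l^a$ and $\pm k^a$ with the $-l^a$ coefficient strictly positive, hence the fiber segment is a causal-or-spacelike curve that immediately leaves $\bar K^+(r)$ into $K^-(r)$; since $\sigma(r')$ is compact and connected, and $K^-(r)$ is open, the whole leaf lands in $K^-(r)$ for $r'$ close enough. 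Then I would invoke Lemma~\ref{lem-secondN}: $N(r')$ is the union of the two null congruences orthogonal to $\sigma(r')$, run to the future-interior and to the past-exterior until caustics/crossings. Because $\sigma(r')\subset K^-(r)$ and the future-directed $\sigma(r')$-orthogonal congruence generating $N^+(r')$ starts in $K^-(r)$, one shows it cannot cross into $\bar K^+(r)$ without its generators meeting $N(r)$ — but $N(r)=\dot K^+(r)$ is itself a null hypersurface generated by $\sigma(r)$-orthogonal geodesics, and two distinct such families of the relevant orientations cannot cross transversally in a way that lets $N^+(r')$ enter $K^+(r)$; the NCC-free part of the argument here is just the causal structure (a future-directed causal curve from $N^+(r')$ to $K^+(r)$ would, chased backward, force $\sigma(r')$ to intersect $\bar K^+(r)$). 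Symmetrically $N^-(r')\subset K^-(r)\cup\sigma(r')$ side. Combining, $N(r')\subset \bar K^-(r)$, equivalently $N(r)\subset K^+(r')$, which is exactly the strong monotonicity $\bar K^+(r)\subset K^+(r')$.

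The main obstacle I anticipate is the non-transversality/crossing argument for the null hypersurfaces $N(r)$ and $N(r')$ — i.e.\ ruling out that a generator of $N^+(r')$ sneaks across $N(r)$ into $K^+(r)$ despite $\sigma(r')$ starting cleanly in $K^-(r)$. This is delicate precisely because ${\cal H}$ may change signature, so the fibers are not uniformly causal, and because $N(r)$ can develop caustics and non-local self-intersections where its structure as a smooth null hypersurface degenerates. The clean way around it is to argue entirely at the level of the open regions $K^\pm$ rather than their null boundaries: use that $K^+(r)$ is, by Corollary~\ref{cor-onlysigma}, the largest connected set containing $I^+(N(r))$ but not $N(r)$, show $I^+(\sigma(r'))\cap \bar K^+(r)$ behaves correctly, and push the achronality-type contradictions through $I^\pm$ and $D^\pm$ of the partial Cauchy surfaces, where standard global hyperbolicity lemmas (edgelessness of $N(r)$, $\dot I^+$ being an achronal $C^0$ hypersurface, domain-of-dependence monotonicity) do all the heavy lifting. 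I would also flag that the endpoint/closedness half of the connectedness argument needs a uniform lower bound on $\epsilon$ away from the endpoints of $I$, which follows from continuity of the foliation and compactness of leaves on any closed subinterval.
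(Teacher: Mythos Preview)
Your overall architecture---reduce to an infinitesimal step using $\alpha<0$ to place $\sigma(r')$ inside $K^-(r)$, then iterate---matches the paper. But the core of your infinitesimal step has a real gap. You try to argue that the null generators of $N^+(r')$, starting in $K^-(r)$, ``cannot cross transversally'' into $K^+(r)$ through $N(r)$. This is exactly the hard part, and your justification (``two distinct such families of the relevant orientations cannot cross transversally''; ``chased backward, force $\sigma(r')$ to intersect $\bar K^+(r)$'') is not an argument: null generators of $N^+(r')$ are future-directed, and nothing you have written prevents them from meeting $N(r)$ and passing through. Moreover, your claimed equivalence ``$N(r')\subset \bar K^-(r)$, equivalently $N(r)\subset K^+(r')$'' is not a tautology---it requires an additional connectedness/side-identification step you have not supplied. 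Your fallback suggestion, to work with $I^\pm$ and $D^\pm$ of the partial Cauchy surfaces, is pointing in the right direction but is left as a slogan.

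The paper avoids the crossing problem entirely by never comparing $N(r)$ with $N(r')$. Instead it works with the Cauchy-surface definition of $K^+$: form the extension $X\equiv \bar\Sigma^+(r)\cup\delta{\cal H}$ of the partial Cauchy surface by the thin strip of ${\cal H}$ between $\sigma(r)$ and $\sigma(r+\delta r)$, and use the trivial monotonicity $I^+(X)\supset I^+(\Sigma^+(r))$, $D^-(X)\supset D^-(\Sigma^+(r))$. When $\delta{\cal H}$ is spacelike one can take $\Sigma^+(r+\delta r)=X$ and the inclusion is immediate. In the general (possibly timelike or mixed) case, one chooses a new Cauchy surface through $\sigma(r+\delta r)$ with $\Sigma^+(r+\delta r)$ nowhere to the future of $X$; then $X$ and $\bar\Sigma^+(r+\delta r)$ bound a ``box'' region $Y\subset I^+(\Sigma^+(r+\delta r))$, and one checks $D^-(X)=Y\cup D^-(\Sigma^+(r+\delta r))$, which yields $K^+(r)\subset K^+(r+\delta r)$. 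The \emph{strict} inclusion $N(r)\subset K^+(r+\delta r)$ is then obtained by noting $\sigma(r)\subset X\subset I^+(\Sigma^+(r+\delta r))$, extending the timelike curve along a null generator of $N^\pm(r)$, and slightly deforming it back to timelike. This ``extend the partial Cauchy surface by $\delta{\cal H}$ and compare domains of dependence'' construction is the missing idea in your proposal.
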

\begin{proof}
We will first prove the inclusion monotonicity of $K^\pm$ under an infinitesimal evolution step $r\to r+\delta r$. The assumption that $\alpha<0$ implies that ${\cal H}$ locally evolves towards the past or exterior of its leaves: for sufficiently small $\delta r<0$,
\begin{equation}
\sigma(r+\delta r)\subset K^-(r)~.
\label{eq-rdr}
\end{equation}
Since $K^-(r)\cap K^+(r)=\varnothing$, it follows that ${\cal H}$ cannot locally evolve into the future or interior of any of its leaves:
\begin{equation}
\sigma(r+\delta r)\cap K^+(r)=\varnothing~.
\end{equation}
Let $\delta {\cal H}$ be the small portion of ${\cal H}$ between $r$ and $r+\delta r$; the above results imply that
\begin{equation}
\delta {\cal H} \subset K^-(r)~,~~~\delta {\cal H} \cap K^+(r)=\varnothing~.
\end{equation}

By Corollary~\ref{cor-onlysigma}, we may choose the sets $\Sigma^+(r)$ to suit our convenience. 
It is instructive to consider first the special case where we can find a Cauchy surface such that $\Sigma^+(r+\delta r) = X$, where
\begin{equation}
X\equiv \bar\Sigma^+(r)\cup \delta {\cal H}~,
\label{eq-x}
\end{equation}
and we recall that an overbar denotes the closure of a set.  This is the case when $\beta>0$ between $\sigma(r)$ and $\sigma(r+\delta r)$, i.e., if $\delta {\cal H}$ is spacelike. Both the future of a set, and the past domain of dependence of a set cannot become smaller when the set is enlarged; hence,
\begin{eqnarray} 
I^+(X) & \supset & I^+(\Sigma^+(r))~,\nonumber \\
D^-(X) & \supset & D^-(\Sigma^+(r))~,
\label{eq-idsuper}
\end{eqnarray} 
and so the infinitesimal version of Eq.~(\ref{eq-kmono}) follows trivially from the definition of $K^+$.

Now consider the general case, with no restriction on the sign of $\beta$. Thus, $\delta {\cal H}$ may be spacelike, timelike ($\beta<0$), or null ($\beta=0$); indeed, it may be spacelike at some portion of $\sigma(r)$ and timelike at another. One can still define the submanifold $X$ as the extension of $\Sigma^+(r)$ by $\delta {\cal H}$, as in Eq.~(\ref{eq-x}); see Fig.~\ref{fig-deltah}. Again, this extension cannot decrease the future of the set, nor its past domain of dependence,\footnote{The future of a set is defined for arbitrary sets. The domain of dependence is usually defined only for certain sets, for example for closed achronal sets in Ref.~\cite{Wald}. Here we extend the usual definition to the more general set $X$: $p\in D^-(X)$ iff every future-inextendible causal curve through $p$ intersects $X$. This is useful for our purposes; however, we caution that certain theorems involving $D^\pm$ need not hold with this broader definition.} as described in Eq.~(\ref{eq-idsuper}).

However, $X$ need not be achronal and hence, it need not lie on any Cauchy surface. In this case, we consider a new Cauchy surface that contains $\sigma(r+\delta r)$. Because $\alpha<0$, this surface can be chosen so that $\Sigma^+(r+\delta r)$ is nowhere to the future of $X$; see Fig.~\ref{fig-deltah}. Since $X$ and $\Sigma^+(r+\delta r)$ share the same boundary $\sigma(r+\delta r)$, $\alpha>0$ then implies that $X$ is entirely in the future of $\Sigma^+(r+\delta r)$: 
\begin{equation}
X\subset I^+(\Sigma^+(r+\delta r))
\label{eq-xfut}
\end{equation}
Moreover, the set $X$ together with $\bar\Sigma^+(r+\delta r)$ forms a ``box'' that bounds an open spacetime region $Y$, such that 
\begin{figure}[t]
\begin{center}
\includegraphics[width=3.0in]{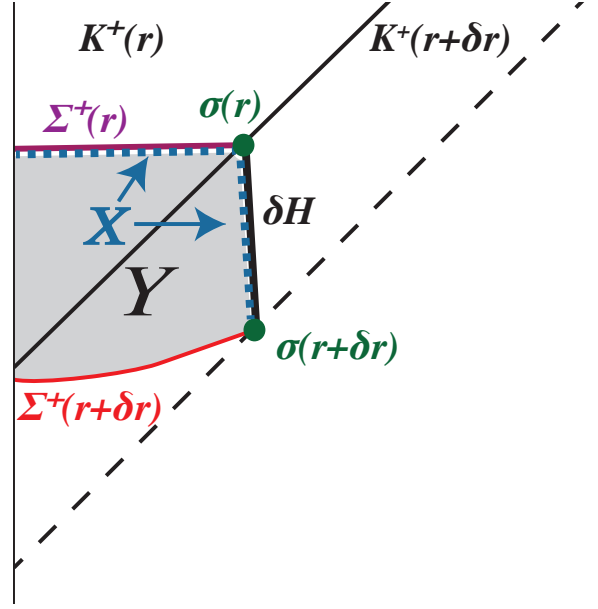}
\caption{Proof that $K^+(r)$ grows monotonically under inclusion, for any foliation $\sigma(r)$ of a hypersurface $\cal H$ with $\alpha<0$. See the main text for details and definitions.}
\label{fig-deltah}
\end{center}
\end{figure}
\begin{equation}
Y\subset  I^+(\Sigma^+(r+\delta r))~.
\label{eq-ysubi}
\end{equation}
All future-directed timelike curves that pass through $\Sigma^+(r+\delta r)$ enter $Y$ and then can exit $Y$ only through $X$. Hence $D^-(X) \subset Y \cup D^-(\Sigma^+(r+\delta r))$. Since $\alpha<0$, for all points outside of $Y \cup D^-(\Sigma^+(r+\delta r))$ there exist future-directed timelike curves that evade $X$. Hence equality holds:
\begin{equation}
D^-(X) = Y \cup D^-(\Sigma^+(r+\delta r))~.
\label{eq-dmx}
\end{equation}
To obtain the infinitesimal inclusion relation, 
\begin{equation}
K^+(r+\delta r)\supset K^+(r)~,
\label{eq-minimono}
\end{equation}
by Eq.~(\ref{eq-idsuper}) it suffices to show that $K^+(r+\delta r)\supset I^+(X)\cup D^-(X)$. Indeed if $p\in I^+(X)$ by Eq.~(\ref{eq-xfut}) $p\in I^+(\Sigma^+(r+\delta r))\subset K^+(r+\delta r)$. And if $p\in D^-(X)$ then by Eqs.~(\ref{eq-dmx}) and (\ref{eq-ysubi}) we again have $p\in  K^+(r+\delta r)$.

To obtain the stricter relation
\begin{equation}
K^+(r+\delta r)\supset \bar K^+(r)~,
\label{eq-minimonos}
\end{equation}
we note that $\sigma(r)\subset X$; hence by Eq.~(\ref{eq-xfut}), for every point $p\in\sigma(r)$ there exists a timelike curve from $\Sigma^+(r+\delta r)$ to $p$. This curve can be continued along the null generator of $N^+(r)$ starting at $p$ to a point $q\in N^+(r)$, and then slightly deformed into a timelike curve connecting $p$ to $q$. By Lemma~\ref{lem-secondN}, every point in $N^+(r)$ lies on a generator starting at $\sigma(r)$. Hence, $N^+(r)\subset K^+(r+\delta r)$. A similar argument yields $N^-(r)\subset K^+(r+\delta r)$. Since $N(r)=N^+(r)\cup N^-(r)$ and $\bar K^+(r) = K^+(r)\cup N(r)$, Eq.~(\ref{eq-minimonos}) follows. 

To extend Eq.~(\ref{eq-minimonos}) to Eq.~(\ref{eq-kmono}), one may iterate the above infinitesimal construction. The only way this could fail is if the iteration gets stuck because the steps $\delta r$ have to be taken ever smaller to keep Eq.~(\ref{eq-rdr}) satisfied. Suppose therefore that the iteration can only reach an open set $(r,r_*)$ but no leaves in the set $(r_*,r_2)$. But this contradicts the assumption that $\alpha<0$ at $r_*$.
\end{proof}

\section{Area Law}
\label{sec-arealaw}

In this section, we prove our main result: that the area of the holographic screen is monotonic. The most difficult part of this task is proving that $\alpha$ cannot change sign on $H$, Theorem~\ref{thm-alpha}).
We then prove our Area Theorem~\ref{thm-area}.
\begin{figure}[h]
\centering
\includegraphics[height=0.35 \textwidth]{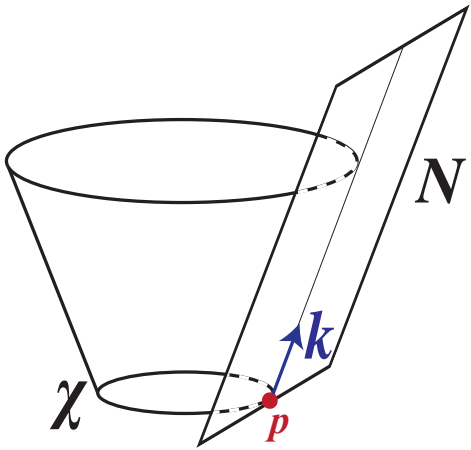}
\caption{An example illustrating Lemma~\ref{monotonicity}: in Minkowski space, the spatial sphere $\chi$ is tangent to the null plane $N$ at $p$ and lies outside the past of $N$ near $p$. It is easy to see that this implies that $\chi$ is a cross-section of a future light-cone that shares one null generator with $N$.  In this example it is obvious that $\chi$  expands faster than $N$ at $p$, as claimed in Lemma~\ref{monotonicity}.}
\label{fig-lemma}
\end{figure}
We begin by stating a useful Lemma.
\begin{lem}  \label{monotonicity}
Let $N$ be a null hypersurface and let $\chi$ be a spacelike surface tangent to $N$ at a point $p$. That is, we assume that one of the two future-directed null vectors orthogonal to $\chi$, $\kappa^a$, is also orthogonal to $N$ at $p$. We may normalize the (null) normal vector field to $N$ so that it coincides with $\kappa^a$ at $p$.  Let $\theta^{(\chi)}$ be the null expansion of the congruence orthogonal to $\chi$ in the $\kappa^a$ direction, and let $\theta^{(N)}$ be the null expansion of the generators of $N$. Then:
\begin{itemize} 
\item If there exists an open neighborhood $O(p)\cap \chi$ that lies entirely outside the past of $N$,\footnote{I.e., there exists no past directed causal curve from any point on $N$ to any point in $O(p)\cap \chi$.} then $\theta^{(\chi)}\geq \theta^{(N)}$ at $p$.
\item If there exists an open neighborhood $O(p)\cap \chi$ that lies entirely outside the future of $N$, then $\theta^{(\chi)}\leq \theta^{(N)}$ at $p$.
\end{itemize} 
\end{lem}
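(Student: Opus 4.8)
The plan is to reduce the lemma to a comparison of two null hypersurfaces that are tangent at $p$, together with the null analogue of the Riemannian mean-curvature comparison. Introduce the null hypersurface $\hat N$ swept out by the null congruence orthogonal to $\chi$ in the $\kappa^a$ direction; near $p$ this congruence has no conjugate points, so $\hat N$ is a smooth embedded null hypersurface there. Let $\gamma$ be the null geodesic through $p$ with tangent $\kappa^a$, affinely parametrized. Since a geodesic is fixed by an initial point and tangent, $\gamma$ is at once the generator of $\hat N$ through $p$ and, with the normalization adopted in the lemma, the generator of $N$ through $p$, so $N$ and $\hat N$ share $\gamma$. Because $\kappa^a$ is null and normal to $\chi$, we have $T_p\hat N = T_p\chi \oplus \mathbb{R}\,\kappa^a = (\kappa^a)^\perp = T_pN$, so $N$ and $\hat N$ are tangent at $p$; and $\chi$ is a cross section of $\hat N$ through $p$ with null normal $\kappa^a$, so $\theta^{(\chi)}(p)=\theta^{(\hat N)}(p)$, the generator expansion of $\hat N$ at $p$. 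The lemma thus becomes: if $\hat N$ lies locally on one side of $N$, compare the generator expansions of the two tangent hypersurfaces at $p$.

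The one-sidedness comes from the causal hypothesis. In the first case, $O(p)\cap\chi$ avoids $I^-(N)$. Suppose some $w\in\hat N$ near $p$ lay in $I^-(N)$: then a future-timelike curve runs from $w$ to a point $z\in N$, and concatenating it with the future-directed null generator of $\hat N$ from its base point $x\in O(p)\cap\chi$ to $w$ gives a future-directed causal curve from $x$ to $z\in N$ that contains a timelike segment, whence $x\in I^-(N)$ --- contradicting the hypothesis. Hence $\hat N$ stays out of $I^-(N)$ near $p$. Since the points immediately to the past of the smooth null hypersurface $N$ lie in $I^-(N)$ --- a local statement, valid near the interior point $p$ --- this forces $\hat N$ onto the causal-future side of $N$ near $p$. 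Concretely, in Gaussian null coordinates $(u,\lambda,\theta^A)$ adapted to $N$ (so $N=\{u=0\}$, generators $\partial_\lambda$, $\gamma$ the axis $\{u=0,\theta^A=0\}$, and $\partial_\lambda=\kappa^a$ at $p$), write $\hat N$ as a graph $u=\Psi(\lambda,\theta^A)$, which is legitimate since $\hat N$ is transverse to $\partial_u$ at $p$; then $\Psi\ge 0$, with $\Psi(0,0)=0$, $\Psi(\lambda,0)\equiv 0$ (since $\gamma\subset\hat N$), and $d\Psi(0,0)=0$ (tangency). So $\Psi$ has a local minimum at the origin and its Hessian there is positive semidefinite; in particular $\mu^{AB}\partial_A\partial_B\Psi|_0\ge 0$, with $\mu_{AB}$ the cross-sectional metric.

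The core step is the expansion comparison itself. The difference of the generator expansions of two tangent null hypersurfaces at the tangency point is the trace, in their common cross-sectional metric, of the difference of their null second fundamental forms in the $\kappa^a$ direction --- a well-defined symmetric tensor on $T_p\chi$, with no extension ambiguity. This tensor is sign-definite because $\hat N$ lies to one side of $N$, and on the causal-future side it is positive semidefinite: in the graph picture it is $\mu^{AB}\partial_A\partial_B\Psi|_0$, which we just showed is $\ge 0$, and the sign of its contribution is fixed by the elementary Minkowski example of Fig.~\ref{fig-lemma}, where a light-cone cross section of radius $\rho$ tangent to a null plane has $\theta^{(\chi)}=(D-2)/\rho>0=\theta^{(N)}$, matching the positive Laplacian of its height function. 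Hence $\theta^{(\chi)}(p)=\theta^{(\hat N)}(p)\ge\theta^{(N)}(p)$, the first claim. For the second claim I would not rerun the argument but reverse the time orientation: then $-\kappa^a$ is a future-directed null normal to both $\chi$ and $N$ at $p$, the hypothesis ``$O(p)\cap\chi$ outside $I^+(N)$'' becomes ``outside $I^-(N)$'', and the first case gives $\theta^{(\chi)}_{-\kappa^a}(p)\ge\theta^{(N)}_{-\kappa^a}(p)$; since the null expansion is linear in the normal, $\theta_{-\kappa^a}=-\theta_{\kappa^a}$, and this is exactly $\theta^{(\chi)}(p)\le\theta^{(N)}(p)$.

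The main obstacle is the sign-careful part of the core step: verifying that one-sidedness of tangent null hypersurfaces forces the asserted sign of the difference of their expansions --- i.e., in the coordinate description, that the Hessian of the height function enters $\theta^{(\hat N)}(p)-\theta^{(N)}(p)$ with a strictly positive coefficient and that the remaining terms really drop at $p$ by $d\Psi(0,0)=0$, using that $\hat N$ is a null, hypersurface-orthogonal congruence. A secondary point needing care is the local statement that points just to the past of a smooth embedded null hypersurface lie in its chronological past, which must be phrased away from any edge of $N$, and the smoothness of $\hat N$ near $p$. The rest --- the causal-curve concatenations and the convexity of $\Psi$ --- is routine.
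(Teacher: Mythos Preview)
Your proposal is correct and follows essentially the same approach the paper invokes: the paper does not spell out a proof but refers to Lemma~A of Wall~\cite{Wal10QST} and characterizes the result as the null analogue of the Riemannian comparison between extrinsic curvature scalars of tangent, non-crossing hypersurfaces---precisely the mechanism you implement via the auxiliary null hypersurface $\hat N$, Gaussian null coordinates, and the Hessian of the height function $\Psi$. One minor remark: your detour through all of $\hat N$ is harmless but not needed, since the relevant Hessian $\mu^{AB}\partial_A\partial_B\Psi|_0$ is determined entirely by the $\lambda=0$ slice of $\hat N$, which is $\chi$ itself; the one-sidedness of $\chi$ alone (the direct hypothesis) already yields $\Psi(0,\theta^A)\ge 0$ and hence the required sign, so the concatenation argument for the rest of $\hat N$ can be dropped.
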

\begin{proof}
See Lemma A in Ref.~\cite{Wal10QST}. Our Lemma is stronger but the proof is the same; so instead of reproducing it here, we offer Fig.~\ref{fig-lemma} to illustrate the result geometrically. It generalizes to null hypersurfaces an obvious relation in Riemannian space, between the extrinsic curvature scalars of two codimension 1 surfaces that are tangent at a point in a Riemannian space but do not cross near that point.
\end{proof}



\begin{thm} Let $H$ be a regular future holographic screen with leaf-orthogonal tangent vector field $h^{a}=\alpha l^{a} + \beta k^{a}$, whose orientation is chosen so that $\alpha<0$ at the leaf $\sigma(0)$. Then $\alpha< 0$ everywhere on $H$.   \label{thm-alpha}
\end{thm}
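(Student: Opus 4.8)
The plan is to argue by contradiction: suppose $\alpha$ changes sign somewhere on $H$. By the second generic condition (Assumption~\ref{def-technical}.b), the zero set $H_0$ is precisely the common boundary $\dot H_- = \dot H_+$, so a sign change means there is a leaf $\sigma(r_*)$ on which $\alpha$ vanishes identically—or, more delicately, a leaf on which $\alpha$ is somewhere negative and somewhere zero, forcing us to track how the $H_-$ region terminates. Starting from $\sigma(0)$ where $\alpha<0$, let $r_*$ be the supremum of values $r$ such that $\alpha<0$ on every leaf in $[0,r)$. On $[0,r_*)$ Theorem~\ref{thm-kmono} applies, so the sets $K^+(r)$ grow strictly monotonically under inclusion, with $N(r_1)\subset K^+(r_2)$ for $r_1<r_2$. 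The idea is to take the limit $r\to r_*$: the nested family $\{K^+(r)\}$ has a well-defined limiting region, whose boundary is a null hypersurface $N_*$ generated (via Lemma~\ref{lem-secondN}) by the $\sigma$-orthogonal null congruences, and this $N_*$ should be closely related to $N(\sigma(r_*))$, the null splitting hypersurface of the limiting leaf.

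\textbf{Key steps.} First I would set up the limiting null hypersurface $N_* = \bigcup_{r<r_*} N^+(r)$ (and similarly on the past side), show it is achronal, and identify its relation to the leaf $\sigma(r_*)$: because the flow has been everywhere to the past-and-exterior, $\sigma(r_*)$ must lie on the boundary of $\overline{\bigcup_{r<r_*} K^+(r)}$, hence $\sigma(r_*)$ is tangent to $N_*$ at each of its points, lying outside the past of $N_*$. Second, I would invoke Lemma~\ref{monotonicity} (the tangency/expansion comparison): at each point of $\sigma(r_*)$, tangency from outside the past forces $\theta^{(\sigma(r_*))}_k \geq \theta^{(N_*)}$ in the outward null direction. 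Third, I would use the NCC together with the first generic condition (Assumption~\ref{def-technical}.a) to control $\theta^{(N_*)}$: the generators of $N_*$ are limits of generators of the $N^+(r)$, which emanate from leaves where $\theta_k=0$; the Raychaudhuri equation with $R_{ab}k^ak^b + \varsigma_{ab}\varsigma^{ab}>0$ forces $\theta_k$ to become strictly negative off each leaf, so along $N_*$ the expansion satisfies $\theta^{(N_*)}<0$ by the time one reaches $\sigma(r_*)$ (the generators have been focusing for a finite affine parameter). Combining, $\theta^{(\sigma(r_*))}_k \geq \theta^{(N_*)}$ would be consistent, so I need the \emph{reverse} inequality to get a contradiction—this is where the marginally trapped condition enters: $\theta^{(\sigma(r_*))}_k = 0$ exactly, while if $\sigma(r_*)$ also bounds from the \emph{past} side (i.e. $\alpha$ becomes positive just beyond $r_*$, so $\sigma(r_*)$ lies outside the \emph{future} of the past-side null hypersurface $N^-_*$), then the second bullet of Lemma~\ref{monotonicity} gives $\theta^{(\sigma(r_*))}_l \leq \theta^{(N^-_*)}$, and focusing on the past side forces $\theta^{(N^-_*)}$ to be strictly negative in the $l$ direction as well—meaning $\theta_l<0$ strictly, which is consistent with marginally trapped, so the contradiction must instead come from comparing the $k$-direction expansion against a hypersurface that has been focusing: $0 = \theta^{(\sigma(r_*))}_k \geq \theta^{(N_*)}$ is violated once $\theta^{(N_*)}>0$, which cannot happen, so the real contradiction is that $N_*$ cannot both focus (NCC + genericity) and satisfy the tangency inequality with a marginal surface unless $r_*$ was not actually a sign change—i.e. $\sigma(r_*)$ still has $\alpha<0$, contradicting maximality, OR the leaf $\sigma(r_*)$ is not complete, which Assumption~\ref{def-technical}.c rules out (the inextendible $\alpha<0$ portion contains a complete leaf or is entirely timelike; the timelike case is handled separately using Remark~\ref{rem-borders}).

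\textbf{Main obstacle.} The hard part will be making the limiting argument rigorous: showing that $N_* = \bigcup_{r<r_*} N^+(r)$ is a genuine $C^0$ (ideally $C^1$ along generators) null hypersurface to which $\sigma(r_*)$ is tangent, and that the focusing accumulated over $r\in[0,r_*)$ does not degenerate in the limit (e.g. the affine length of generators before reaching $\sigma(r_*)$ stays bounded below). One must also carefully treat the possibility that $\sigma(r_*)$ is reached along a purely timelike portion of $H$ (where $\alpha<0$ throughout but the portion is not guaranteed to contain a complete leaf), using Assumption~\ref{def-technical}.c and the border structure of Remark~\ref{rem-borders} to show $\alpha$ cannot flip sign across such a portion. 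A secondary subtlety is that Lemma~\ref{monotonicity} gives a one-sided inequality, so the contradiction must be extracted from the \emph{strictness} supplied by the first generic condition—the NCC alone gives $\theta^{(N_*)}\le 0$, but genericity upgrades this to strict negativity after finite affine parameter, which is exactly what clashes with the marginal condition $\theta_k=0$ once one also knows (from the monotonicity of $K^+$) that $\sigma(r_*)$ must touch $N_*$ from the appropriate side.
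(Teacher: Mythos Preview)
Your proposal has a genuine gap: the comparison you set up does not produce a contradiction, and you notice this yourself without resolving it.

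First, the object $N_* = \bigcup_{r<r_*} N^+(r)$ is not a null hypersurface. By Theorem~\ref{thm-kmono} the sets $N(r)$ for distinct $r$ are pairwise disjoint, so their union over an interval is a codimension-$0$ region, not a codimension-$1$ surface to which Lemma~\ref{monotonicity} could be applied. If instead you mean the limiting hypersurface $\lim_{r\to r_*} N(r)$, that limit is simply $N(r_*)$, which contains $\sigma(r_*)$ as a cross-section; tangency is then trivial and the expansion comparison is $0\ge 0$, yielding nothing.

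Second, even granting a well-defined $N_*$ with generators that have been focusing, the inequality goes the wrong way. Focusing along the future-directed $k$-congruence drives $\theta_k$ \emph{negative}; Lemma~\ref{monotonicity} with $\sigma(r_*)$ outside the past of $N_*$ gives $\theta_k^{(\sigma(r_*))}\ge \theta^{(N_*)}$, i.e.\ $0\ge(\text{something}\le 0)$, which is consistent. You recognize this (``would be consistent, so I need the reverse inequality'') but the subsequent attempts to extract a contradiction from the $l$-direction or from ``cannot both focus and satisfy the tangency inequality'' do not work.

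The idea you are missing is this: rather than comparing $\sigma(r_*)$ to a null hypersurface built from leaves \emph{before} the transition, the paper takes the null hypersurface $N(1+\epsilon)$ attached to a leaf \emph{just past} the assumed transition, and traces the fibers of $H$ from the offending $\alpha>0$ region back toward $\sigma(0)$. These fibers must exit $K^-(1+\epsilon)$ through $N(1+\epsilon)$; the point of minimal $r$ on that intersection is a point $Q$ where some earlier leaf $\sigma(R)$, $R<1$, is tangent to $N(1+\epsilon)$ from outside its past. The crucial gain is that along the \emph{past}-directed piece $N^-(1+\epsilon)$ the expansion $\theta_k$ is strictly \emph{positive} (it starts at zero on $\sigma(1+\epsilon)$ and increases to the past by the first generic condition), so Lemma~\ref{monotonicity} gives $0=\theta_k^{(\sigma(R))}\ge \theta_k^{(N^-(1+\epsilon))}>0$, the desired contradiction. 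Knowing whether the fibers cross $N^-$ or $N^+$ requires a case split on the sign of $\beta$ at the transition; the $\beta<0$ case does not yield a direct contradiction and is instead reduced to the $\beta>0$ case under flow reversal, using Assumption~\ref{def-technical}.c. Your proposal does not contain either of these two essential ingredients.
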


\begin{proof}
By contradiction: suppose that $H$ contains a point with $\alpha\geq 0$. It immediately follows that the subset $H_+\subset H$ of points with $\alpha> 0$ is nonempty, since by Assumption~\ref{def-technical}.b, $\alpha=0$ can occur only as a transition between $\alpha<0$ and $\alpha>0$ regions. Let $\sigma(0)$ be the complete leaf that exists by Remark~\ref{rem-s0exists} and has $r=0$, $\alpha<0$, by Convention~\ref{conv-orient}. By continuity of $\alpha$, there exists an open neighborhood of $\sigma(0)$ where $\alpha<0$. 

\begin{figure}[t]
\includegraphics[height=0.3 \textwidth]{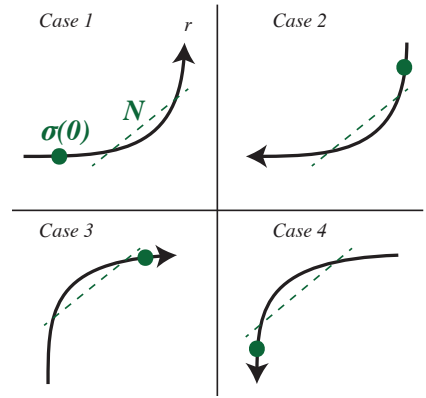}
\caption{The four types of spacelike-timelike transitions on a future holographic screen that would violate the monotonicity of the area, and which our proof in Sec.~\ref{sec-arealaw} will exclude. Near $\sigma(0)$, the area increases in the direction of the arrow. On the far side of the ``bend'' the area would decrease, in the same direction. There are other types of spacelike-timelike transitions which preserve area monotonicity under uniform flow; these do arise generically (see Fig.~\ref{fig-examplesPRD}a).}
\label{fig-sphsym}
\end{figure}
We first consider the case where $H_+$ has a component in the $r>0$ part of $H$ (cases 1 and 2 in Fig.~\ref{fig-sphsym}). Let $\sigma(1)$ be the ``last slice'' on which $\alpha\leq 0$, i.e., we use our freedom to rescale $r$ to set
\begin{equation}
1=\inf\{r:r>0,\sigma(r)\cap H_+ \neq \varnothing\}
\end{equation}
By the second generic condition~\ref{def-technical}.b, $\alpha<0$ for all leaves $\sigma(r)$ with $0<r<1$; hence by Theorem~\ref{thm-kmono} we have $K^-(0)\supset \bar K^-(1)$. Since the former set is open and the latter is closed, there exists an open neighborhood of $\bar K^-(1)$ that is contained in $K^-(0)$. Thus for sufficiently small $\epsilon$ we have
\begin{equation}
K^-(0)\supset K^-(1+\epsilon)~.
\label{eq-nocon}
\end{equation}

By continuity of $\alpha$, $\sigma(1)$ must contain at least one point with $\alpha=0$. Let $p$ denote this point; or, if there is more than one such point, let $p$ denote a connected component of the set of points with $\alpha=0$ on $\sigma(1)$. Since there is no point with $\alpha=\beta=0$, there exists an open neighborhood $O(p)\subset H$ in which $\beta$ has definite sign.  (Note that we do not assume that $\beta$ is of fixed sign for $0<r<1$.) 
\begin{figure*}[htb]
\centering
\subfigure[ ]{
\centering
\includegraphics[width=0.35 \textwidth]{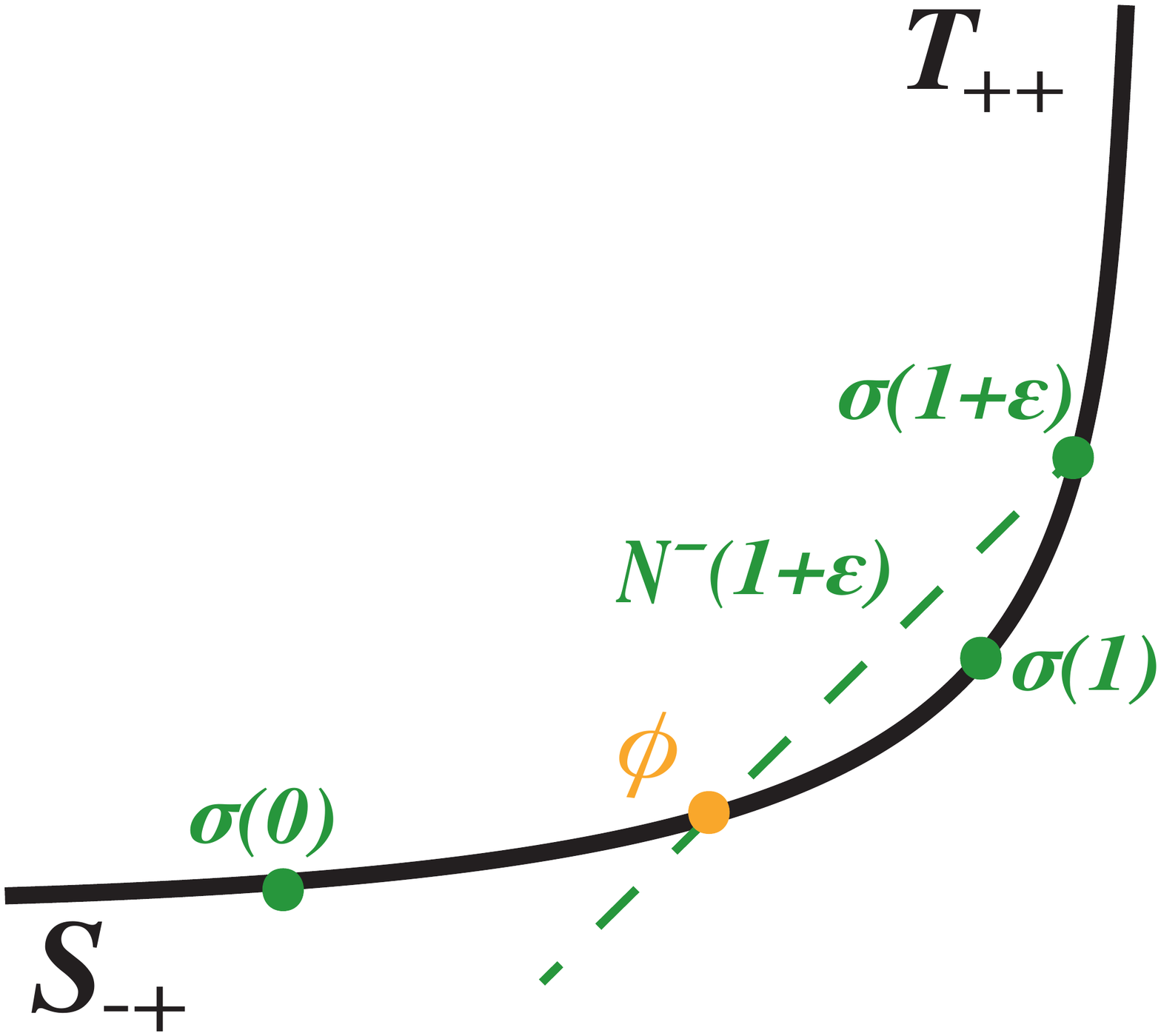}}
\qquad
\subfigure[]{
\centering \includegraphics[width=0.35\textwidth]{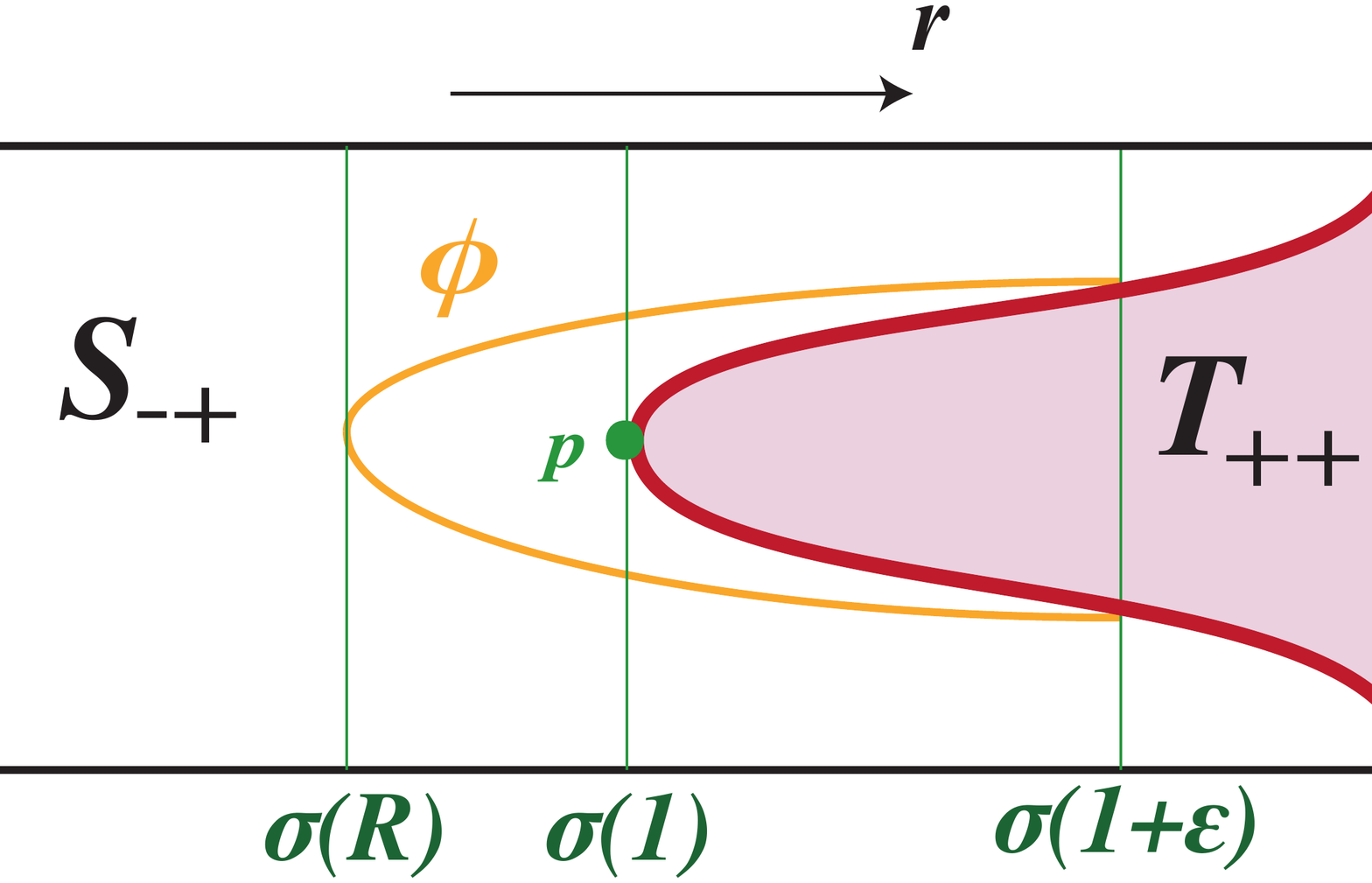}
}
\caption{A case 1 transition ($S_{-+}\to T_{++}$) is impossible. The proof crucially involves the intersection $\phi$ of a light-sheet $N$ originating just behind the assumed transition, with the region prior to the transition. For further details see the main text. (a) Spacetime diagram with two spatial directions suppressed. (b) Diagram of the holographic screen $H$ only, with only one spatial direction suppressed. Vertical lines correspond to leaves; top and bottom edge should be identified. Portions of different signature are indicated by shading and labels. In general, the transition boundary (thick red line) will not coincide with a leaf (thin black vertical lines).}
\label{fig-Case1LR}
\end{figure*}

\vskip .6cm
\noindent {\bf Case 1 \ } We now specialize further to the case where $\beta>0$ in $O(p)$, so that the assumed sign change from $\alpha<0$ to $\alpha>0$ corresponds to a transition of $h^a$ from spacelike-outward ($S_{-+}$) to timelike-future-directed ($T_{++}$). The following construction is illustrated in Fig. \ref{fig-Case1LR}.

Let $\sigma^+(1+\epsilon)$ be the set of points with $\alpha>0$ on the leaf $\sigma(1+\epsilon)$. If there is more than one connected component, we choose $\sigma^+(1+\epsilon)$ to be the component at least one of whose fibers intersects $p$. By choosing $\epsilon$ sufficiently small, we can ensure that $\sigma^+(1+\epsilon) \subset O(p)$. Let $\Gamma$ be the set of fibers that pass through $\sigma^+(1+\epsilon)$. 

Because $\alpha>0$, all fibers in $\Gamma$ enter $K^-(1+\epsilon)$ as we trace them back to smaller values of $r$. But $\sigma(0)$ is entirely outside of this set: by definition, $\sigma(0)\cap K^-(0)=\varnothing$, so Eq.~(\ref{eq-nocon}) implies $\sigma(0)\cap K^-(1+\epsilon)=\varnothing$. Hence, all fibers in $\Gamma$ also intersect $N(1+\epsilon)$, at some positive value of $r< 1+\epsilon$. Because $\beta>0$ in $O(p)$, this intersection will be with $N^-(1+\epsilon)$. By smoothness and the second generic assumption, the intersection will consist of one point per fiber. (Otherwise a fiber would coincide with a null generator of $N^{-}(1+\epsilon)$ in a closed interval.) 
The set of all such intersection points, one for each fiber in $\Gamma$, defines a surface $\phi$, and the fibers define a continuous, one-to-one map $\sigma^+(1+\epsilon)$ to $\phi$. Similarly, the closures of both sets, $\bar\sigma^+(1+\epsilon)$ and $\bar\phi$ are related by such a map. Note that these two sets share the same boundary at $r=1+\epsilon$.

Let $R$ be the minimum value of $r$ on the intersection: $R\equiv\inf\{r(q): q\in \bar\phi\}$. Since $\bar\sigma^+(1+\epsilon)$ is a closed subset of a compact set, it is compact; and by the fiber map, $\bar\phi$ is also compact. Therefore $R$ is attained on one or more points in $\bar\phi$. Let $Q$ be such a point. Since $R<1$ but $\dot\phi\subset \sigma(1+\epsilon)$, $Q\notin\dot\phi$, and hence $Q$ represents a local minimum of $r$. Hence the leaf $\sigma(R)$ is tangent to the null hypersurface $N^{-}(1+\epsilon)$ at $Q$.

Since $Q$ achieves a global minimum of $r$ on $\bar\phi$, $\sigma(R)$ lies nowhere in the past of $N^{-}(1+\epsilon)$ in a sufficiently small open neighborhood of $Q$. For suppose there existed no such neighborhood. Then fibers arbitrarily close to the one containing $Q$ (and hence connected to $\sigma^+(1+\epsilon)$ would still be inside $K^-(1+\epsilon)$ at $R$. Hence we could find a value $r<R$ on $\phi$ by following such a fiber to smaller values of $r$ until it leaves $K^{-}(1+\epsilon)$. But this would contradict our construction of $Q$ as a point that attains the minimum value of $r$ on $\phi$.

Thus, Lemma~\ref{monotonicity} implies that $\theta_k^{\sigma(R)}\geq\theta_k^{N^{-}(1+\epsilon)}$ at $Q$. By the first generic assumption, the latter expansion is strictly positive, so we learn that $\theta_k^{\sigma(R)}>0$ at $Q$.  But this contradicts the defining property of holographic screens, that all leaves are marginally trapped ($\theta_k^{\sigma(r)}=0$ for all $r$). 

\vskip .6cm
\noindent {\bf Case 2 \ }  Next we consider the case where $\beta<0$ in the neighborhood of the assumed transition from $\alpha<0$ to $\alpha>0$ that begins at $r=1$ (see Fig.~\ref{fig-sphsym}). This corresponds to the appearance of a spacelike-inward-directed region within a timelike-past-directed region: $T_{--}\to S_{+-}$.

We note that the direct analogue of the above proof by contradiction fails: tracing back the generators from $\sigma^+(1+\epsilon)$ to $\sigma(0)$, one finds that they pass through $N^+(1+\epsilon)$, rather than $N^-(1+\epsilon)$. But $N^+$ has negative expansion by the first generic condition, whereas $N^-$ had positive expansion. There is no compensating sign change elsewhere in the argument; in particular, the tangent leaf $\sigma(R)$ with vanishing expansion again lies nowhere in the past of $N^+$ in a neighborhood of the tangent point $Q$. Thus no contradiction arises with Lemma~\ref{monotonicity}.

Instead, we show that every case 2 transition implies the existence of a case 1 transition at a {\em different}\/ point on $H$, under the reverse flow $r\to c-r$. Since we have already shown that case 1 transitions are impossible, this implies that case 2 transitions also cannot occur. 

Let us first illustrate this argument in the simple case where the transition occurs entirely on a single leaf: $\alpha<0$ for $0\leq r<1$, $\alpha=0$ at $r=1$, and $\alpha>0$ for $1<r\leq 2$. Under a reversal of the flow, $r\to 2-r$, $\alpha$ and $\beta$ both change sign. With this flow direction, the latter region now contains a leaf $\sigma(0)$ on which $\alpha<0$, and thus the starting point of our case 1 proof.  The putative sign change of $\alpha$ corresponds to a case 1 transition $S_{-+}\to T_{++}$. The case 1 proof by contradiction rules out this transition.

\begin{figure*}[htb]
\centering
\subfigure[ ]{
\centering
\includegraphics[height=0.25 \textwidth]{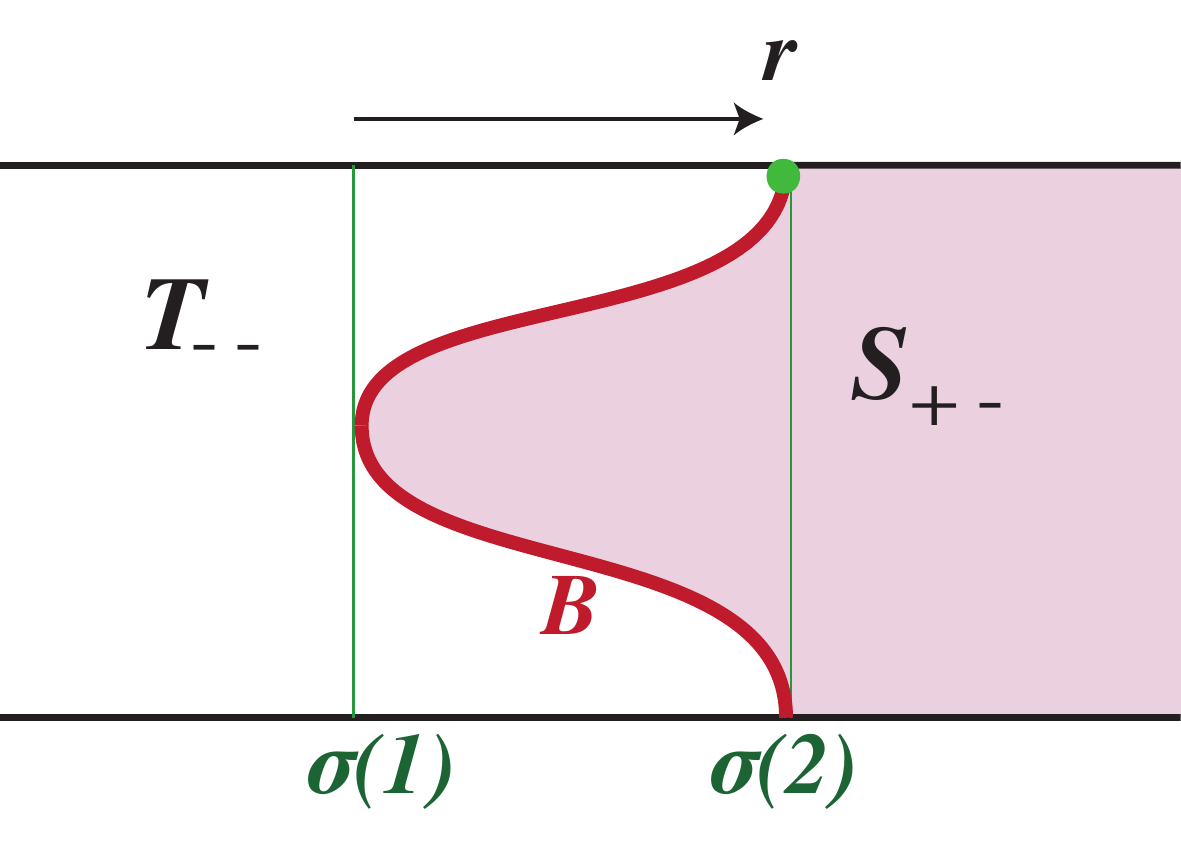}}
\qquad
\subfigure[]{
\centering \includegraphics[height=0.25\textwidth]{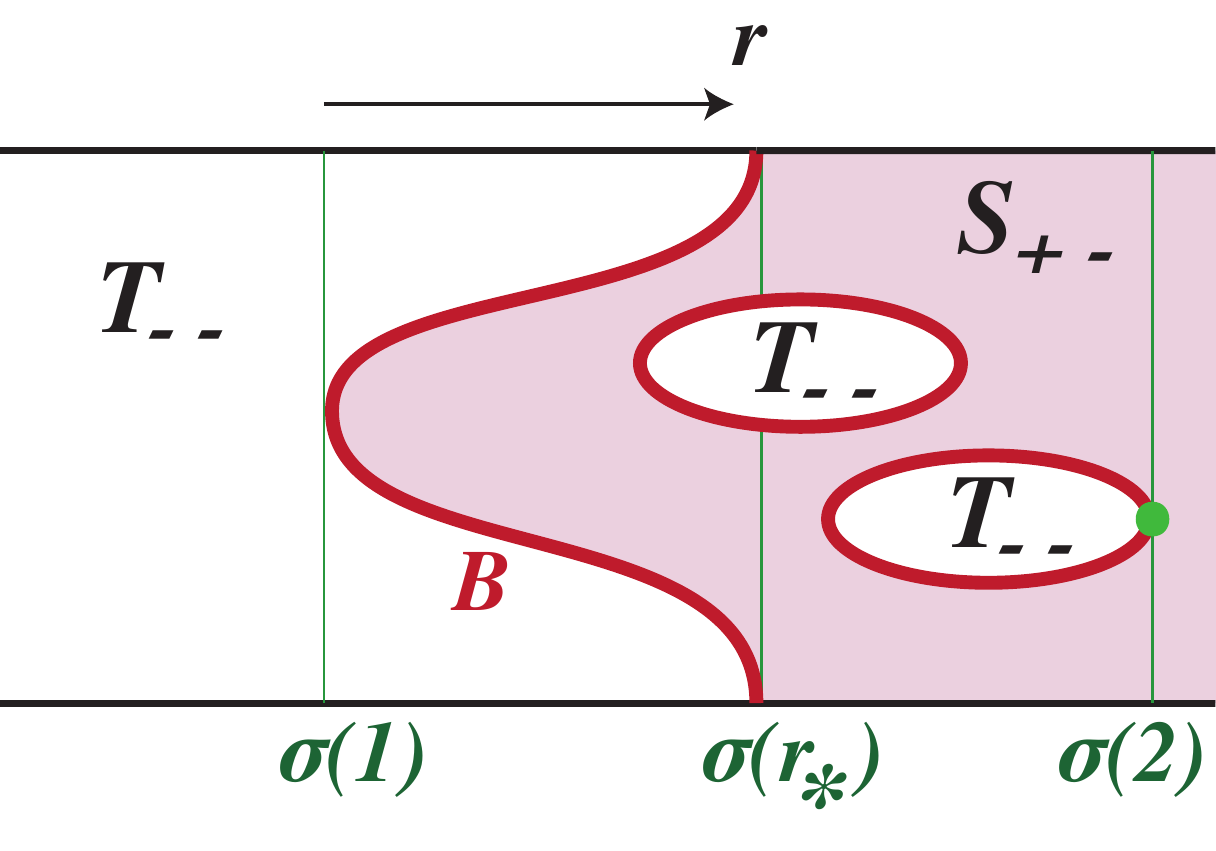}
}
\caption{A case 2 transition ($T_{--}\to S_{+-}$) is impossible. By assumption, the $\alpha>0$ region contains a complete leaf $\sigma(2+\epsilon)$. In the text we show that the complete-leaf region begins at some leaf $\sigma(2)$ where a $T_{--}\to S_{+-}$ boundary comes to an end: either the original one (a), or a different one containing a $T_{--}$ region with no complete leaf (b). The endpoint (green dot) becomes the starting point of a case 1 transition ($S_{-+}\to T_{++}$) under reversal of the flow direction; but this case has already been ruled out.}
\label{fig-case2proof}
\end{figure*}

In general, the case 2 transition need not occur on a single leaf, so we shall assume for contradiction only that $\alpha$ first becomes positive at some point on or subset of $\sigma(1)$, as in the case 1 proof, and that $\beta<0$ in a neighborhood of this set. Let $\tilde H_+$ denote the connected region with $\alpha>0$ that begins at this transition. Since the transition is $T_{--}\to S_{+-}$, $\tilde H_+$ contains some spacelike points; and hence by Def.~\ref{def-technical}.c, $\tilde H_+$ contains a complete leaf with $\alpha>0$. We use our freedom to rescale $r$ to set
\begin{equation}
2=\inf\{r:r>0,\sigma(r)\subset \tilde H_+\}
\label{eq-leaf2}
\end{equation}
By the second generic assumption, Def.~\ref{def-technical}.b, this choice implies the existence of an open interval $(2,2+\epsilon)$ such that every leaf in this interval is a complete leaf with $\alpha>0$. Let us call this intermediate result (*); see Fig.~\ref{fig-case2proof} which also illustrates the remaining arguments.

We now consider the boundary $B$ that separates the $\alpha<0$ from the $\alpha>0$ region, i.e., the connected set of points with $\alpha=0$ that begins at $r=1$. Because $\alpha$ and $\beta$ cannot simultaneously vanish, we have $\beta<0$ in an open neighborhood of all of $B$. Thus, $B$ separates a $T_{--}$ region at smaller $r$ from a $S_{+-}$ region at larger $r$. We note that $B$ must intersect every fiber, or else $H_+$ would not contain a complete leaf. Moreover, $B$ must end at some $r_*\leq 2$, or else there would be points with $\alpha<0$ in the interval $(2,2+\epsilon)$, in contradiction with (*). 

If $r_*=2$ then under the reverse flow starting from the complete leaf at $r=2+\epsilon$ there is a case 1 transition at $r=2$ from $S_{-+}$ to $T_{++}$, and we are done. This is shown in Fig.~\ref{fig-case2proof}a.  

The only remaining possibility is that $B$ ends at some $r_*\in (1,2)$; this is shown in Fig.~\ref{fig-case2proof}b. Then every leaf with $r\in (r_*,2)$ must contain points with $\alpha<0$, or else there would be a complete leaf with $\alpha>0$ at some $r<2$, in contradiction with Eq.~(\ref{eq-leaf2}). Therefore each leaf with $r\in (r_*,2)$ must intersect one or more $\alpha<0$ regions $\tilde H_-^{(i)}$ that are disconnected from the $T_{--}$ region bounded by $B$. None of these regions $\tilde H_-^{(i)}$ can contain a complete $\alpha<0$ leaf, because this would imply that $\tilde H_+$ does not contain a complete $\alpha>0$ leaf. From Def.~\ref{def-technical}.c it follows that each region $\tilde H_-^{(i)}$ is everywhere timelike, i.e., of type $T_{--}$. But this implies that a $T_{--}$ region ends at $r=2$ where $\alpha$ becomes positive. Moreover, the $S_{+-}$ region in which the $T_{--}$ region ends has complete leaves in some open interval $(2,2+\epsilon)$ by our result (*). Thus we find again that under the reverse flow starting from the complete leaf at $r=2+\epsilon$ there is a case 1 transition at $r=2$ from $S_{-+}$ to $T_{++}$.

We have thus established that a case 2 transition at $r=1$ implies a case 1 transition at the same or a larger value of $r$, after reversal of the direction of flow. Since case 1 transitions are impossible, we conclude that case 2 transitions are also impossible.
\begin{figure*}[ht]

\includegraphics[width= 0.85\textwidth]{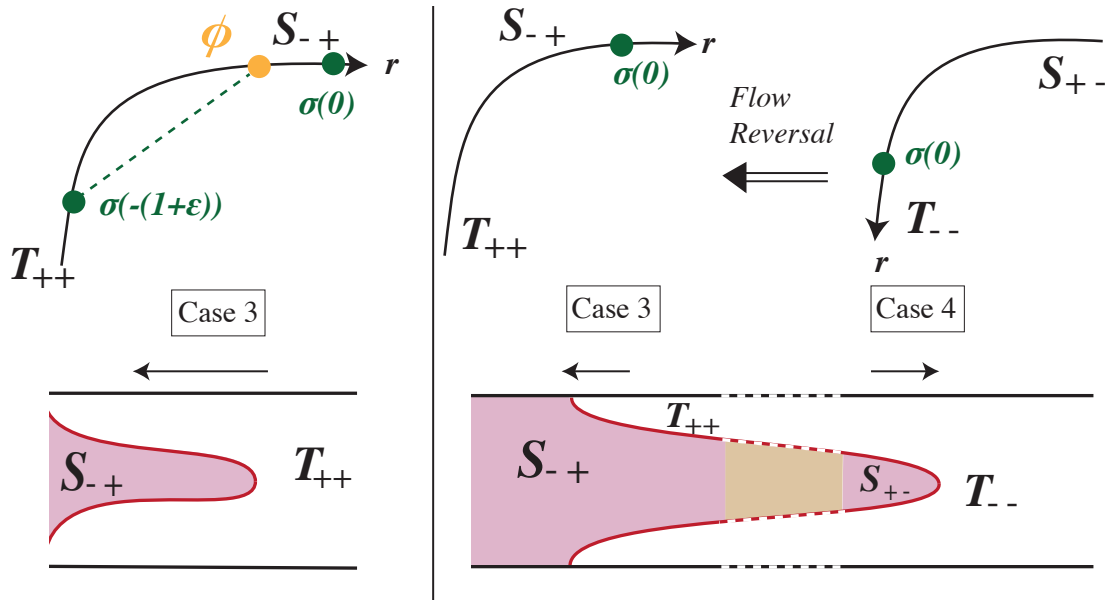}
\caption{(a) Case 3 is ruled out analogously to case 1, by contradiction.  (b) Case 4 is analogous to case 2: the transition is impossible because it would imply a case 3 transition elsewhere on $H$, under reversal of the flow direction.}
\label{fig-cases}
\end{figure*}

\vskip .6cm
\noindent {\bf Cases 3 and 4} 
Our consideration of cases 1 and 2 has ruled out the possibility of points with $\alpha>0$ at any $r>0$. (Recall that $r=0$ corresponds to a complete leaf with $\alpha<0$.) We must now also rule out the possibility that $\alpha$ might be positive in the region $r<0$; this corresponds to cases 3 and 4 in Fig.~\ref{fig-sphsym}. Again, assume for contradiction that such a transition occurs, and focus on the transition nearest to $r=0$. We may rescale $r$ so that this transition ends at $r=-1$. That is, $\alpha<0$ for all $r\in (-1,0)$, but all leaves in some interval $(-(1+\epsilon),-1)$ contain points with $\alpha>0$. Again, a further case distinction arises depending on the sign of $\beta$ at this transition.
 
The proof of case 3 (Fig.~\ref{fig-cases}a), where $\beta>0$ at the transition, proceeds exactly analogous to that of case 1. Fibers that connect the offending region to $r=0$ must cross the null hypersurface $N^+(-(1+\epsilon))$, implying the existence of a leaf $\sigma(R)$, $-1<R<0$ that is tangent to $N^+(-(1+\epsilon))$ and nowhere to the future of $N^+(-(1+\epsilon))$. But $N^+$ contracts at this tangent point whereas $\sigma(R)$ has vanishing expansion, in contradiction with the second part of Lemma~\ref{monotonicity}.

The proof of case 4 (Fig.~\ref{fig-cases}b) proceeds analogous to that of case 2, by showing that a case 4 transition at $r=-1$ implies the existence of a transition at some $r\leq -1$ that is recognized as a case 3 transition after reversal of the flow direction, and hence ruled out.

\end{proof}

\begin{figure*}[ht]
\includegraphics[width=6.5in]{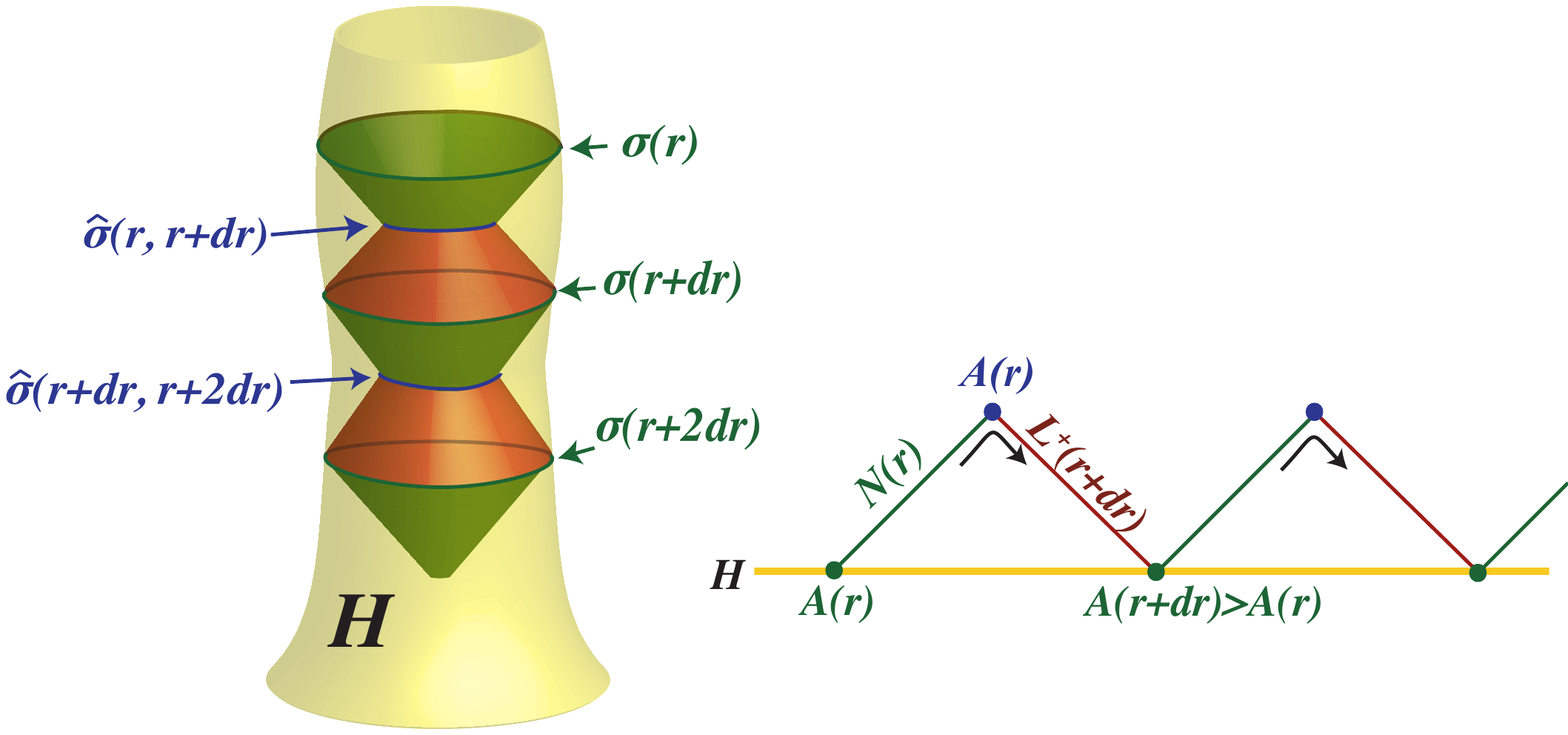}
\caption{Proof that $\alpha<0$ implies an area law, by a zig-zag construction. For any pair of infinitesimally nearby leaves, consider the intersection $\hat\sigma$ of the light-sheet pair $N(r)$ tangent to the marginal direction $k^a$, with the lightsheet $L^+(r+\delta r)$ generated by $l^a$. To first order in $\delta r$, the area is constant as we leave $H$ from $\sigma(r)$ along $N(r)$ to $\hat \sigma$, and the area increases as we follow $L^+$ from $\hat\sigma$ back onto $H$. This construction assumes only that $\alpha<0$, so that evolution is in the $-l^a$ direction; it does not require $H$ to have uniform signature near $\sigma(r)$.  (a) Timelike case, with one spatial dimension suppressed and relevant 2-surfaces labeled. (b) Spacelike case, with all spatial dimensions suppressed. In this plot, we choose to label the relevant null surfaces; the two-surfaces are labeled by their area not by their name.}
\label{fig-zigzag}
\end{figure*}
We now state and prove the area law.
\begin{thm} \label{thm-area}
The area of the leaves of any regular future holographic screen $H$ increases strictly monotonically:
\begin{equation}
\frac{dA}{dr}>0~.
\label{eq-area}
\end{equation}
\end{thm}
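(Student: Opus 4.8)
The plan is to exploit Theorem~\ref{thm-alpha}, which guarantees $\alpha<0$ everywhere, so that $H$ always flows in the $-l^a$ direction (to the past or exterior of each leaf), together with the marginal condition $\theta_k=0$ and the first generic condition (which makes $\theta_k$ strictly negative off the leaf along $N^\pm$). The key geometric device is the ``zig-zag'' construction illustrated in Fig.~\ref{fig-zigzag}: given two infinitesimally nearby leaves $\sigma(r)$ and $\sigma(r+\delta r)$, I would leave $H$ at $\sigma(r)$ along the null hypersurface $N(r)$ generated in the $k^a$ (marginal) direction, travel to an auxiliary surface $\hat\sigma$, and then return to $H$ landing on $\sigma(r+\delta r)$ by flowing along the null hypersurface $L^+(r+\delta r)$ generated by $l^a$. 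One picks $\hat\sigma$ precisely as the intersection $N(r)\cap L^+(r+\delta r)$; since $\alpha<0$, the evolution vector $h^a=\alpha l^a+\beta k^a$ has its $l^a$-component pointing the correct way for this intersection to exist to first order in $\delta r$, regardless of the sign of $\beta$ (i.e.\ regardless of whether $H$ is locally spacelike or timelike near $\sigma(r)$).

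The two legs of the zig-zag are then estimated separately. Along the first leg, from $\sigma(r)$ to $\hat\sigma$ inside $N(r)$: because $\theta_k=0$ on $\sigma(r)$, the area is stationary to first order, so $\mathrm{Area}(\hat\sigma)=\mathrm{Area}(\sigma(r))+O(\delta r^2)$. (Note $\theta_k$ becomes negative farther along $N(r)$ by the first generic condition and NCC, but we only move an $O(\delta r)$ affine distance, so this only contributes at second order.) Along the second leg, from $\hat\sigma$ back to $\sigma(r+\delta r)$ along $L^+$: the relevant expansion is $\theta_l$, which is strictly negative on the leaves by the marginally trapped condition~\eqref{trapped}. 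Crucially, we are traversing $L^+$ in the $-l^a$ direction (from $\hat\sigma$ toward $H$), so the area \emph{increases} along this leg by an amount $\propto |\theta_l|\,\delta r>0$ at first order. Combining, $\mathrm{Area}(\sigma(r+\delta r))-\mathrm{Area}(\sigma(r))>0$ at first order in $\delta r$, which gives $dA/dr>0$.

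The main subtlety — and the reason this is phrased as a ``zig-zag'' rather than a direct computation of $\frac{d}{dr}\mathrm{Area}(\sigma(r))$ along $h^a$ — is that $h^a$ itself mixes $k^a$ and $l^a$, and $\beta$ can have either sign and can change sign (the timelike/spacelike transitions of Fig.~\ref{fig-examplesPRD}a). A naive variation would give $\frac{dA}{dr}=\int_{\sigma}(\alpha\theta_l+\beta\theta_k)=\int_\sigma \alpha\theta_l$ using $\theta_k=0$, and since $\alpha<0$ and $\theta_l<0$ this integrand is positive — so in fact the computation is clean once Theorem~\ref{thm-alpha} is in hand. I would therefore actually present the short direct argument: parametrize the flow by $h^a$, use the first variation of area $\delta A=\int_\sigma \theta_{h}$ with $\theta_h=\alpha\theta_l+\beta\theta_k$, invoke $\theta_k=0$ (marginal) to drop the $\beta$ term, and invoke $\alpha<0$ (Theorem~\ref{thm-alpha}) and $\theta_l<0$ (trapped) to conclude the integrand, and hence $dA/dr$, is strictly positive. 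The zig-zag picture in Fig.~\ref{fig-zigzag} then serves to make this manifestly coordinate-independent and to show it does not rely on $H$ having uniform signature. The only remaining point to check is that $\theta_l$ is bounded away from zero on each compact leaf so that the strict inequality survives integration — which is immediate from compactness and continuity, since $\theta_l<0$ pointwise.
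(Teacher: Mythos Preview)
Your proposal is correct and essentially identical to the paper's proof: the paper uses exactly the zig-zag construction you describe (first leg along $N(r)$ with $\theta_k=0$ giving $O(\delta r^2)$, second leg along $L^+(r+\delta r)$ with $\theta_l<0$ and $\alpha<0$ giving strict first-order increase), and then records the direct first-variation formula $dA/dr=\int_{\sigma(r)}\sqrt{h^{\sigma(r)}}\,\alpha\,\theta_l$ as an immediate corollary. The only cosmetic difference is emphasis: you would lead with the integral computation and cite the zig-zag as geometric backup, whereas the paper leads with the zig-zag and states the integral as Corollary~\ref{cor-quant}.
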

\begin{proof} 
By Theorem~\ref{thm-alpha}, $\alpha<0$ everywhere on $H$. In regions where $\beta$ is of definite sign, the result would then follow from the analysis of Hayward~\cite{Hay93} (using a 2+2 lightlike formalism) or that of Ashtekar and Krishnan~\cite{AshKri02} who used a standard 3+1 decomposition. It should be straightforward to generalize their proofs to the case where $\beta$ may not have definite sign on some or all leaves. However, since this would necessitate the introduction of additional formalism, we will give here a simple, geometrically intuitive proof. Our construction is shown in Fig.~\ref{fig-zigzag}. 

Consider two infinitesimally nearby leaves at $r$ and $r+dr$, $dr>0$. Construct the null hypersurface $N(r)$ in a neighborhood of $\sigma(r)$. Also, construct the null hypersurface $L^+(r+dr)$ generated by the future-directed null geodesics with tangent vector $l^a$, in a neighborhood of $\sigma(r+dr)$. By Theorem~\ref{thm-alpha}, for sufficiently small $dr$ these null hypersurfaces intersect on a two-dimensional surface $\hat\sigma(r,r+dr)$, such that every generator of each congruence lies on a unique point in $\hat\sigma(r,r+dr)$. 

Note that in regions where $H$ is spacelike, $\beta>0$, the intersection will lie in $N^+(r)$; if $H$ is timelike, $\beta<0$, the intersection will lie in $N^-(r)$; but this makes no difference to the remainder of the argument. Crucially, Theorem~\ref{thm-alpha} guarantees that the intersection always lies in $L^+(r+dr)$, and never on $L^-(r+dr)$, the null hypersurface generated by the past-directed null geodesics with tangent vector $-l^a$.
We now exploit the defining property of $H$, that each leaf is marginally trapped ($\theta^{\sigma(r)}_k=0$). This implies
\begin{eqnarray} 
A[\hat\sigma]-A[\sigma(r)] & = & O(dr^2)~;\\
A[\sigma(r+dr)]-A[\hat\sigma] & = & O(dr)>0~.
\end{eqnarray} 
Hence, the area increases linearly in $dr$ between any two nearby leaves $\sigma(r)$, $\sigma(r+dr)$. This implies that the area increases strictly monotonically with $r$.

\end{proof}

\begin{cor}\label{cor-quant}
The above construction implies, more specifically, that the area of leaves increases at the rate
\begin{equation}
\frac{dA}{dr} = \int_{\sigma(r)} \sqrt{h^{\sigma(r)}}~ \alpha \theta^{\sigma(r)}_l~.
\label{eq-specarea}
\end{equation}
where $h_{ab}^{\sigma(r)}$ is the induced metric on the leaf $\sigma(r)$ and $h^{\sigma(r)}$ is its determinant. Note that the integrand is positive definite since $\alpha<0$ and all leaves are marginally trapped; in this sense the area theorem is local. However, the theorem applies to complete leaves only, not to arbitrary deformations of leaves.
\end{cor}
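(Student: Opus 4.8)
The plan is to make the zig-zag of Theorem~\ref{thm-area} quantitative by expanding each leg to first order in $dr>0$. Since the fibers are orthogonal to the leaves, the displacement carrying $\sigma(r)$ to $\sigma(r+dr)$ along $H$ is purely normal and equals $h^a\,dr=\alpha(x)\,l^a\,dr+\beta(x)\,k^a\,dr$ pointwise on $\sigma(r)$. The zig-zag realizes this displacement as the composition of a first leg $\sigma(r)\to\hat\sigma$ running along $N(r)$, i.e.\ along the marginal direction $k^a$, and a second leg $\hat\sigma\to\sigma(r+dr)$ running along $L^+(r+dr)$, the future $l^a$-congruence off $\sigma(r+dr)$. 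Writing the first leg as $\lambda(x)\,k^a\,dr$ and the second as $\mu(x)\,l^a\,dr$ and matching their sum to $h^a\,dr$ forces $\lambda=\beta$ and $\mu=\alpha$ to leading order; in particular $\mu<0$, which is precisely the statement (and a quantitative version of it) that $\hat\sigma$ lies to the past of $\sigma(r+dr)$ along $l^a$, so that the intersection lands on $L^+(r+dr)$ and never on $L^-(r+dr)$.

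Next I would evaluate the area change along each leg using the first variation of area under a null normal deformation: deforming a codimension-2 spacelike surface along the normal field $c\,k^a$ changes its area element at the rate $\sqrt{h}\,c\,\theta_k$, because the gradient of the function $c$ enters only through terms contracted with the tangential part of $k^a$, which vanish since $k^a\perp\sigma$; the same holds for $l^a$. For the first leg this gives $A[\hat\sigma]-A[\sigma(r)]=\int_{\sigma(r)}\sqrt{h}\,\beta\,\theta_k\,dr+O(dr^2)=O(dr^2)$, since $\theta_k=0$ on every leaf. For the second leg, $\hat\sigma$ differs from $\sigma(r)$ only by an $O(dr)$ displacement, so its area element and its $\theta_l$ agree with those of $\sigma(r)$ up to $O(dr)$ corrections; hence $A[\sigma(r+dr)]-A[\hat\sigma]=\int_{\sigma(r)}\sqrt{h^{\sigma(r)}}\,\alpha\,\theta^{\sigma(r)}_l\,dr+O(dr^2)$, using $\mu=\alpha$. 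Adding the two legs and dividing by $dr$ yields Eq.~(\ref{eq-specarea}). Positivity of the integrand is then immediate: $\alpha<0$ by Theorem~\ref{thm-alpha} and $\theta^{\sigma(r)}_l<0$ by the marginally trapped condition, so $\alpha\,\theta^{\sigma(r)}_l>0$; this re-derives $dA/dr>0$ locally, with no integration by parts or global input, once $\alpha<0$ is known.

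Finally I would note why the formula is tied to the defining foliation of $H$ rather than to arbitrary deformations of a leaf. The vanishing of the $\beta\,\theta_k$ contribution — the entire reason the $N(r)$ leg costs no area at first order — relies on $\theta_k=0$ on the surface one starts from, which holds only for genuine leaves. A deformation of a leaf that spoiled $\theta_k=0$ would reinstate the $\int\sqrt{h}\,\beta\,\theta_k$ term and could change area at first order, so Eq.~(\ref{eq-specarea}) must be read strictly as the $r$-derivative along the leaf foliation.

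The main obstacle is not conceptual but consists of two points of rigor: (i) justifying the first-variation-of-area identity for null normal deformations with position-dependent coefficients and the disappearance of the gradient-of-$c$ terms; and (ii) checking that in the second leg one may replace $\hat\sigma$ by $\sigma(r)$ inside the integral with only an $O(dr^2)$ error, which amounts to a change of variables along the composite of the fiber map and the null-generator map. Both are routine once one works in coordinates adapted to the $k^a$- and $l^a$-congruences in a neighborhood of $\sigma(r)$, using the smoothness and transversality of the intersection $\hat\sigma$ already established in the proof of Theorem~\ref{thm-area}.
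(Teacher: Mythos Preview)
Your proposal is correct and follows exactly the route the paper intends: the corollary is stated without a separate proof, only as a quantitative consequence of the zig-zag construction in Theorem~\ref{thm-area}, and you have supplied precisely those details---identifying the $k$-leg with the $\beta\,k^a\,dr$ part of $h^a\,dr$ (which contributes $O(dr^2)$ since $\theta_k=0$) and the $l$-leg with the $\alpha\,l^a\,dr$ part (which contributes $\int\sqrt{h}\,\alpha\,\theta_l\,dr$ via the first variation of area). Your closing remarks on why the formula is tied to the leaf foliation, and on the two routine rigor points, are also apt and go slightly beyond what the paper makes explicit.
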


\begin{cor}\label{cor-past}
For past holographic screens, we recall the contrasting convention that $\alpha>0$ on $\sigma(0)$. The above arguments then establish that $\alpha>0$ everywhere on $H$. Eqs.~(\ref{eq-area}) and (\ref{eq-specarea}) hold as an area theorem. 
\end{cor}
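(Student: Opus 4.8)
The plan is to obtain Corollary~\ref{cor-past} from Theorems~\ref{thm-alpha} and~\ref{thm-area} and Corollary~\ref{cor-quant} by exploiting the time-reversal symmetry of the entire construction, rather than by re-running all the proofs. Let $(M,g,\tau)$ be the given globally hyperbolic spacetime with time orientation $\tau$, and let $H$ be a regular past holographic screen in it. The first step is to observe that $H$ is a regular \emph{future} holographic screen in $(M,g,-\tau)$, the same spacetime with the opposite time orientation. A Cauchy surface is a Cauchy surface for either time orientation (the definition involves only causal curves and inextendibility, which involve no choice of time orientation), so $(M,g,-\tau)$ is again globally hyperbolic; and the NCC $R_{ab}k^ak^b\geq 0$ is stated for ``any null vector $k$'' and is therefore unchanged. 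If $k^a,l^a$ are the $\tau$-future-directed null normals to a leaf, with $\theta_k=0$ and $\theta_l>0$ (marginally anti-trapped), then $-k^a,-l^a$ are the $(-\tau)$-future-directed normals, with $\theta_{-k}=-\theta_k=0$ and $\theta_{-l}=-\theta_l<0$: the leaves of $H$ are marginally trapped in $(M,g,-\tau)$.

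Next I would verify that the conditions of Definition~\ref{def-technical} transfer. Condition (a) holds because $R_{ab}k^ak^b$ and $\varsigma_{ab}\varsigma^{ab}$ are invariant under $k^a\to-k^a$ and involve no choice of time orientation. For (b) and (c), note that the leaf-orthogonal field $h^a$ is fixed by the fibration and the parameter $r$ alone, so it is the same vector regardless of time orientation; writing $h^a=\alpha l^a+\beta k^a=\tilde\alpha(-l^a)+\tilde\beta(-k^a)$ gives $\tilde\alpha=-\alpha$ and $\tilde\beta=-\beta$. Hence $H_+$ and $H_-$ interchange while $H_0$ is unchanged, and conditions (b) and (c) are symmetric under this interchange, ``timelike'' being orientation-independent. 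Condition (d) refers only to Cauchy surfaces and is untouched. Finally, the past-screen convention ``$\alpha>0$ on $\sigma(0)$'' is precisely the future-screen convention ``$\tilde\alpha<0$ on $\sigma(0)$'' of Convention~\ref{conv-orient}.

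The last step is to invoke the established results in $(M,g,-\tau)$. Theorem~\ref{thm-alpha} yields $\tilde\alpha<0$ everywhere, i.e.\ $\alpha>0$ everywhere on $H$. Theorem~\ref{thm-area} yields $dA/dr>0$, since both the leaf area $A$ and the parameter $r$ are insensitive to time orientation; this is Eq.~(\ref{eq-area}). For the rate, Corollary~\ref{cor-quant} applied in $(M,g,-\tau)$ reads $dA/dr=\int_{\sigma(r)}\sqrt{h^{\sigma(r)}}\,\tilde\alpha\,\theta^{\sigma(r)}_{\tilde l}$, and substituting $\tilde\alpha=-\alpha$ and $\theta_{\tilde l}=\theta_{-l}=-\theta_l$ reproduces Eq.~(\ref{eq-specarea}) verbatim, now with $\alpha>0$ and $\theta_l>0$, so the integrand remains positive definite.

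I do not expect a real obstacle; the content is bookkeeping. The one point that requires care is the sign/orientation dictionary: the future-directed null normals flip under time reversal, so $\alpha\to-\alpha$, $\beta\to-\beta$, $\theta_l\to-\theta_l$, and the side ``$\Sigma^+$ towards which $l^a$ points'' flips as well, so that the roles of $K^+$ and $K^-$ (and of $I^\pm$, $D^\pm$, $N^\pm$) interchange consistently throughout Sections~\ref{sec-monosplit}--\ref{sec-arealaw}. Once this is tracked, every inequality in Theorems~\ref{thm-kmono}, \ref{thm-alpha} and~\ref{thm-area} translates to the correct statement for past holographic screens. An alternative, more pedestrian route is to rewrite the arguments of those sections with $\theta_l<0$ replaced by $\theta_l>0$ and $I^\pm$, $D^\pm$, $N^\pm$, $K^\pm$ interchanged throughout; the time-reversal shortcut merely makes this symmetry manifest in a single step.
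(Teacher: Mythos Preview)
Your proposal is correct and in the same spirit as the paper. The paper offers no separate proof for this corollary; it simply asserts that ``the above arguments then establish'' the result, leaving the sign-flipping to the reader. Your time-reversal reduction makes this explicit and cleanly justifies why every step of Theorems~\ref{thm-kmono}, \ref{thm-alpha}, \ref{thm-area} and Corollary~\ref{cor-quant} transfers, which is exactly what the paper's phrase ``the above arguments'' is gesturing at; the pedestrian alternative you mention at the end (re-running the proofs with $I^\pm$, $D^\pm$, $N^\pm$, $K^\pm$ interchanged) is the other natural reading of that phrase.
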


\begin{rem}\label{rem-arrow}
We note that the area increases in the outside or future direction along a past holographic screen. With an interpretation of area as entropy, the holographic screens of an expanding universe thus have a standard arrow of time. 
\end{rem}
\begin{rem}\label{rem-backarrow}
By contrast, the area increases in the outside or {\em past} direction along a future holographic screen. Thus,  {\em the arrow of time runs backwards on the holographic screens inside black holes, and near a big crunch}. Perhaps this intriguing result is related to the difficulty of reconciling unitary quantum mechanics with the equivalence principle~\cite{Haw76,AMPS,Bou12c,AMPSS,Bou13,MarPol13,Bou13a,Bou13b}.
\end{rem}

We close with a final theorem that establishes the uniqueness of the foliation of $H$:
\begin{thm}
Let $H$ be a regular future holographic screen with foliation $\{\sigma(r)\}$. Every marginally trapped surface $s\subset H$ is one of the leaves $\sigma(r)$. 
\end{thm}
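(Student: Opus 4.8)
The plan is to argue by contradiction and reduce the statement to the strong maximum principle for an elliptic equation, much as in the uniqueness proof for dynamical horizons~\cite{AshGal}. Take $s$ connected (otherwise treat each component separately, using that a leaf is connected), and restrict the evolution parameter $r$ to $s$. Since $s$ is compact, $r|_s$ attains a maximum $r_+$ at some $q\in s$, lying on the leaf $\sigma(r_+)$. If $r|_s\equiv r_+$ then $s\subseteq\sigma(r_+)$ is a compact connected submanifold of the leaf of the same dimension, hence open and closed in $\sigma(r_+)$, so $s=\sigma(r_+)$ and we are finished. Assume henceforth that $r|_s$ is non-constant; I will derive a contradiction.

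At $q$ the differential $dr$ annihilates $T_q s$, so $T_q s\subseteq\ker\!\big(dr|_{T_q H}\big)=T_q\sigma(r_+)$, and matching dimensions gives $T_q s=T_q\sigma(r_+)$: thus $s$ is tangent to $\sigma(r_+)$ at $q$ and shares its null normals $k^a,l^a$ there. The fiber vector $h^a_q=\alpha l^a+\beta k^a$ is not tangent to $\sigma(r_+)$, hence not tangent to $s$, so the fibration is transverse to $s$ near $q$ and $s$ is there the graph $r=f(y)$ of a smooth function over a patch of the leaf, with $f\le r_+$ and equality at $q$. In other words $s$ lies locally on one side of $\sigma(r_+)$, tangent at $q$ --- the local reflection of the global statement $s\subset\bar K^+(r_+)$ that would follow from Theorem~\ref{thm-kmono}.

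The crucial observation is that, because $\alpha<0$ everywhere by Theorem~\ref{thm-alpha}, the condition $\theta_k=0$ for surfaces inside $H$ is \emph{elliptic} in the graph function: a normal deformation of a leaf within $H$ points along $h^a=\alpha l^a+\beta k^a$, and whereas the $k^a$-component of a normal deformation enters $\delta\theta_k$ only through transport terms, the $l^a$-component contributes the intrinsic Laplacian of the leaf; hence the linearization of $f\mapsto\theta_k[\mathrm{graph}_f]$ has principal part $-\alpha\,\Delta_{\sigma(r_+)}+(\text{lower order})$, uniformly elliptic precisely because $\alpha\neq0$. Both $\sigma(r_+)$ (graph $f\equiv r_+$) and $s$ (graph $f$) solve $\theta_k=0$, so $u:=f-r_+\le0$ solves a linear elliptic equation with bounded coefficients and vanishes at the interior point corresponding to $q$; the strong maximum principle (equivalently, the weak Harnack inequality for $-u\ge0$) then forces $u\equiv0$ nearby, i.e.\ $s$ agrees with $\sigma(r_+)$ on a neighborhood of $q$ in $s$. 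I would finish by a connectedness argument: let $W\subseteq s$ be the set of points near which $s$ coincides with $\sigma(r_+)$. It is nonempty and open, and it is closed because every point of $\bar W$ has $r=r_+$, hence is again a maximum of $r|_s$, so the same local argument applies there; connectedness of $s$ gives $W=s$, whence $r|_s\equiv r_+$, contradicting non-constancy. Therefore $r|_s$ is constant and $s$ is a leaf.

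The step I expect to be the main obstacle --- and the genuinely new input --- is the ellipticity claim: that confining the deformation to lie \emph{within} $H$ turns $\theta_k=0$ from a degenerate equation (which it would be for a deformation in the null $k^a$ direction) into a bona fide elliptic one. This hinges entirely on $\alpha\neq0$, i.e.\ on Theorem~\ref{thm-alpha}, which is why the uniqueness theorem must come last. The subsidiary technical points --- that near each maximum $s$ is a $C^2$ graph over the leaf, and that the one-sidedness has the sign demanded by the maximum principle --- are routine consequences of smoothness and of $q$ being a maximum of $r|_s$, but should be verified with care.
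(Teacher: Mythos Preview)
Your argument is correct and takes a genuinely different route from the paper's. The paper proceeds by contradiction as you do, noting that a non-leaf $s$ is tangent to the extremal leaves $\sigma(r_1)$ and $\sigma(r_2)$ of the interval it occupies; but rather than working intrinsically on $H$, it invokes the spacetime null hypersurface $N(r_2)$ and the monotonicity result Theorem~\ref{thm-kmono} (together with Theorem~\ref{thm-alpha}) to conclude that the $k$-null hypersurface off $s$ cannot coincide with $N(r_2)$, and then appeals to Lemma~B of Ref.~\cite{Wal10QST} to force $\theta_k^{(s)}\neq 0$ somewhere. Your approach instead linearizes $\theta_k=0$ \emph{within} $H$ and applies the strong maximum principle on the leaf, exactly in the spirit of the Ashtekar--Galloway uniqueness argument; the key new observation---which you identify correctly---is that $\alpha\neq 0$ (Theorem~\ref{thm-alpha}) supplies the ellipticity that would be automatic in the purely spacelike dynamical-horizon setting. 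The payoff of your route is self-containment: it needs only Theorem~\ref{thm-alpha} and standard elliptic theory, not the $K^\pm/N$ machinery of Sec.~\ref{sec-monosplit} or Wall's lemma. The paper's route, by contrast, is extremely short once those tools are in hand and keeps the analysis in the ambient spacetime rather than on $H$. Your caveat about verifying the precise form of the linearized operator (so that the strong maximum principle, or equivalently the weak Harnack inequality for the difference of two solutions, applies without a sign assumption on the zeroth-order coefficient) is well placed; this is routine but should indeed be checked, and the fact that $\beta$ may change sign is harmless since the $k^a$-variation contributes no second-order terms.
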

\begin{proof}
By contradiction: suppose that $s$ is marginally trapped and distinct from any $\sigma(r)$.  Thus $s$ intersects the original foliation in a nontrivial closed interval $[r_1,r_2]$ and is tangent to $\sigma(r_1)$ and $\sigma(r_2)$. The $\theta=0$ null vector field orthogonal to $s$ must coincide with $k^a$ at the tangent point with $\sigma(r_2)$. Since $r_1<r_2$, Theorems~\ref{thm-alpha} and \ref{thm-kmono} imply that $N(r_2)$ does not everywhere coincide with the null hypersurface orthogonal to $s$ with tangent vector $k^a$ at $\sigma(2)$. Lemma B in Ref.~\cite{Wal10QST} then implies that $\theta_{k^{(s)}}^{(s)}\neq 0$ somewhere on $s$, in contradiction with the assumption that $s$ is marginally trapped.
\end{proof}

\vskip .3cm
\indent {\bf Acknowledgments} 
It is a pleasure to thank S.~Fischetti, D.~Harlow, G.~Horowitz, W.~Kelly, S.~Leichen\-auer, D.~Marolf, R.~Wald, and A.~Wall for discussions and correspondence. NE thanks the Berkeley Center for Theoretical Physics and the UC Berkeley Physics Department for their hospitality. The work of RB is supported in part by the Berkeley Center for Theoretical Physics, by the National Science Foundation (award numbers 1214644 and 1316783), by fqxi grant RFP3-1323, and by the US Department of Energy under Contract DE-AC02-05CH11231. The work of NE is supported in part by the US NSF Graduate Research Fellowship under Grant No. DGE-1144085 and by NSF Grant No. PHY12-05500.

\bibliographystyle{utcaps}
\bibliography{all}
\end{document}